\newcommand{\commentout}[1]{}
\newcommand{\heather}[1]{{\color{black} #1}} 
\newcommand{\guarnera}[1]{{\color{black} #1}} 
\newcommand{\smid}{{\mid}}
\newcounter{subfloat}
\renewcommand{\thesubfloat}{\alph{subfloat}}
\newcommand{\image}[2]{%
  \stepcounter{subfloat}%
  \begin{tabular}[t]{@{}c@{}}
  #2 \\
  (\thesubfloat) #1
  \end{tabular}%
}
\makeatletter\@addtoreset{case}{lemma}\makeatother
\makeatletter\@addtoreset{case}{theorem}\makeatother
\title{Obstructions to a small hyperbolicity in Helly graphs}
\author{Feodor F. Dragan \and Heather M. Guarnera}
\begin{document}
\institute{Department of Computer Science, Kent State University, Kent, OH, USA \\
\email{dragan@cs.kent.edu, hmichaud@kent.edu}}

\maketitle

\begin{abstract} 
\heather{
The $\delta$-hyperbolicity of a graph is defined by a simple 4-point condition: for any four vertices $u$, $v$, $w$, and $x$, the two larger of the distance sums $d(u,v) + d(w,x)$,  $d(u,w) + d(v,x)$, and $d(u,x) + d(v,w)$ differ by at most $2\delta \geq 0$.
Hyperbolicity can be viewed as a measure of how close a graph is to a tree metrically; the smaller the hyperbolicity of a graph, the closer it is metrically to a tree. A graph $G$ is Helly if its disks satisfy the Helly property, i.e., every family of pairwise intersecting disks in $G$ has a common intersection. 
It is known that for every graph $G$ there exists the smallest  Helly graph ${\cal H}(G)$ into which $G$ isometrically embeds (${\cal H}(G)$ is called the injective hull of $G$) and
}the hyperbolicity of ${\cal H}(G)$ is equal to the hyperbolicity of $G$. Motivated by this, we investigate structural properties of Helly graphs that govern their hyperbolicity and identify three isometric subgraphs of the King-grid as structural obstructions to a small hyperbolicity in Helly graphs.

\medskip
\noindent
{\bf Keywords:} Hyperbolicity; structural properties; structural obstructions; injective hull; Helly graphs; isometric subgraphs; King-grid.
\end{abstract}

 %
 %
 \section{Introduction}
 \guarnera{The $\delta$-hyperbolicity of a graph can be viewed as a measure of how close a graph is to a tree metrically; the smaller the hyperbolicity of a graph, the closer it is metrically to a tree.}
 Recent empirical studies indicated that a large number of real-world
 networks, including Internet application networks, web networks, collaboration networks, social networks and biological networks, have small hyperbolicity \cite{AbuAta16,Adcock13,ChFHM,JLB2008,Kennedy13,Montgolfier11,Narayan11,Shavitt08}.
 This motivates much research to understand the structure and characteristics
 of hyperbolic graphs \cite{AbuAta16,Bandelt03,BrKoMo2001,ChChHiOs2014,ChChPaPe2014,Chepoi08,Chepoi17,Coudert14,Koolen02,VS2014,Wu11}, as well as
 algorithmic implications of small hyperbolicity \cite{ChChPaPe2014,Chepoi08,Chepoi17,Chepoi12,Chepoi07,DGKMY2015,EKS2016,Gavoille05,Krauthgamer06,VS2014}.
 One aims at developing approximation algorithms for certain optimization problems whose approximation factor depends only on the hyperbolicity of the input graph.
 To the date such approximation algorithms exist for radius and diameter \cite{Chepoi08}, minimum ball covering \cite{Chepoi07}, $p$-centers \cite{EKS2016}, sparse additive spanners \cite{Chepoi12}, the Traveling Salesmen Problem \cite{Krauthgamer06}, to name a few, which all have an approximation ratio that depends only on the hyperbolicity of the input graph.
 \heather{
 Notably, there is a quasilinear time algorithm~\cite{EKS2016} for the $p$-center problem with additive error at most $3\delta$,
 whereas in general it is known~\cite{Hsu79} that determining an $\alpha$-approximate solution to $p$-centers is NP-hard whenever $\alpha < 2$.
 In another example,}
 there is a linear time algorithm~\cite{Chepoi08} that for any graph $G$ finds a vertex $v$ with eccentricity at most $rad(G)+5\delta$ (almost the radius of $G$) and a pair of vertices $u,v$ such that the distance between $u$ and $v$ is at most $diam(G)-2\delta$ (almost the diameter of $G$), where $\delta$ is the hyperbolicity parameter of $G$.

 In this paper, we are interested in understanding what structural properties of graphs govern their hyperbolicity and in identifying structural obstructions to a small hyperbolicity. It is a well-known fact that the treewidth of a graph $G$ is always greater than or equal to the size of the largest square grid minor of $G$. Furthermore, in the other direction, the celebrated grid minor theorem by Robertson and Seymour \cite{RobSey86} says that there exists a function $f$ such that the treewidth is at most $f(r)$ where $r$ is the size of the largest square grid minor. To the date the best bound on $f(r)$ is $O(r^{98+o(1)})$: every graph of treewidth larger than $f(r)$ contains an $(r \times r)$ grid as a minor~\cite{ChChSTOC2014}.
 Can similar ``obstruction'' results be proven for the hyperbolicity parameter?

 We show in this paper that the {\em thinness of metric intervals} governs the hyperbolicity of a Helly graph and that {\em three isometric subgraphs of the King-grid} are the only obstructions to a small hyperbolicity in Helly graphs. Our interest in Helly graphs (the graphs in which disks satisfy the Helly property) stems from the following two facts. We formulate them in the context of graphs although they are true for any metric space.
 For every graph $G$ there exists the smallest Helly graph ${\cal H}(G)$ into which $G$ isometrically embeds; ${\cal H}(G)$ is called the {\em injective hull} of $G$ \cite{Dr,Is}.
 If $G$ is a $\delta$-hyperbolic graph, then ${\cal H}(G)$ is also $\delta$-hyperbolic and every vertex of ${\cal H}(G)$ is within distance $2\delta$ from some vertex of $G$ \cite{Lang13}.
 Thus, from our main result for Helly graphs (see Theorem \ref{general-forbidden}), one can state the following.
 \begin{theorem}
 An arbitrary graph $G$ has hyperbolicity at most $\delta$ if and only if its injective hull ${\cal H}(G)$ contains \\
  - no $H_2^{\delta}$,  when $\delta$ is an integer, \\
  - neither $H_1^{\delta+\frac{1}{2}}$ nor  $H_3^{\delta-\frac{1}{2}}$, when $\delta$ is a half-integer,  \\
 from Fig. \ref{fig:combinedFrames-hellified} as an isometric subgraph.
 \end{theorem}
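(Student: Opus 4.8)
The plan is to obtain the statement as an immediate corollary of Theorem~\ref{general-forbidden} (the main result for Helly graphs) together with the two properties of the injective hull recalled in the introduction, so essentially no new work beyond bookkeeping is needed.

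First I would record the chain of equivalences for the hyperbolicity parameter. Since $G$ embeds isometrically into ${\cal H}(G)$, restricting the four-point condition to quadruples of vertices of $G$ shows at once that the hyperbolicity of $G$ is at most that of ${\cal H}(G)$; conversely, by \cite{Lang13}, if $G$ is $\delta$-hyperbolic then ${\cal H}(G)$ is $\delta$-hyperbolic as well. Hence the hyperbolicity of ${\cal H}(G)$ equals that of $G$, and in particular $G$ has hyperbolicity at most $\delta$ if and only if ${\cal H}(G)$ has hyperbolicity at most $\delta$.

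Second, ${\cal H}(G)$ is by definition a Helly graph, so Theorem~\ref{general-forbidden} applies to it verbatim: ${\cal H}(G)$ has hyperbolicity at most $\delta$ if and only if it contains no $H_2^{\delta}$ as an isometric subgraph when $\delta$ is an integer, and if and only if it contains neither $H_1^{\delta+\frac12}$ nor $H_3^{\delta-\frac12}$ as an isometric subgraph when $\delta$ is a half-integer. (Recall that $2\delta$ is always a nonnegative integer, being realized as a difference of two integer distance sums, so these two cases are exhaustive; moreover the shifts $\delta\pm\frac12$ are precisely the integer values for which $H_1$ and $H_3$ in Fig.~\ref{fig:combinedFrames-hellified} are defined.) Composing this equivalence with the one from the previous paragraph yields the claimed characterization of when an arbitrary graph $G$ has hyperbolicity at most $\delta$.

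The only point that requires a moment's care is matching the parametrization: one must check that the parameter $\delta$ in the present statement lines up with the parameters appearing in Theorem~\ref{general-forbidden} in each of the two cases ($\delta$ an integer versus $\delta$ a half-integer), and confirm that nothing substantive is hidden in the passage from $G$ to ${\cal H}(G)$ beyond the cited equality of hyperbolicities and the fact that ${\cal H}(G)$ is Helly. I do not expect any genuine obstacle here, since all the mathematical content has already been established in Theorem~\ref{general-forbidden} and in \cite{Lang13}.
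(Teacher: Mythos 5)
Your proposal is correct and matches the paper exactly: the paper presents this theorem as an immediate consequence of Theorem~\ref{general-forbidden} applied to the Helly graph ${\cal H}(G)$, using the cited facts that $G$ embeds isometrically into ${\cal H}(G)$ and that ${\cal H}(G)$ preserves the hyperbolicity of $G$. Your parameter bookkeeping ($k=\delta$ in the integer case, $\delta=k+\frac{1}{2}$ giving $H_1^{k+1}$ and $H_3^{k}$ in the half-integer case) is the only verification needed, and it checks out.
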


 \begin{figure}[htb]
 \vspace*{-9mm}
 \begin{center}
    \footnotesize
    \stackunder[5pt]
    {\includegraphics[scale=.45]{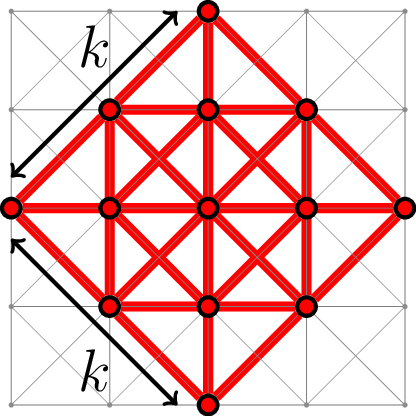}}{$H_1^{k}$}%
    \quad
    \footnotesize
    \stackunder[5pt]
    {\includegraphics[scale=.45]{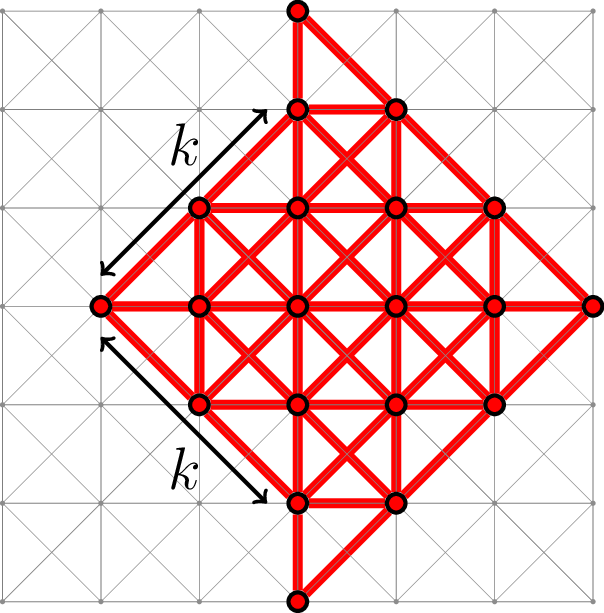}}{$H_2^{k}$}%
    \quad
    \footnotesize
    \stackunder[5pt]
    {\includegraphics[scale=.45]{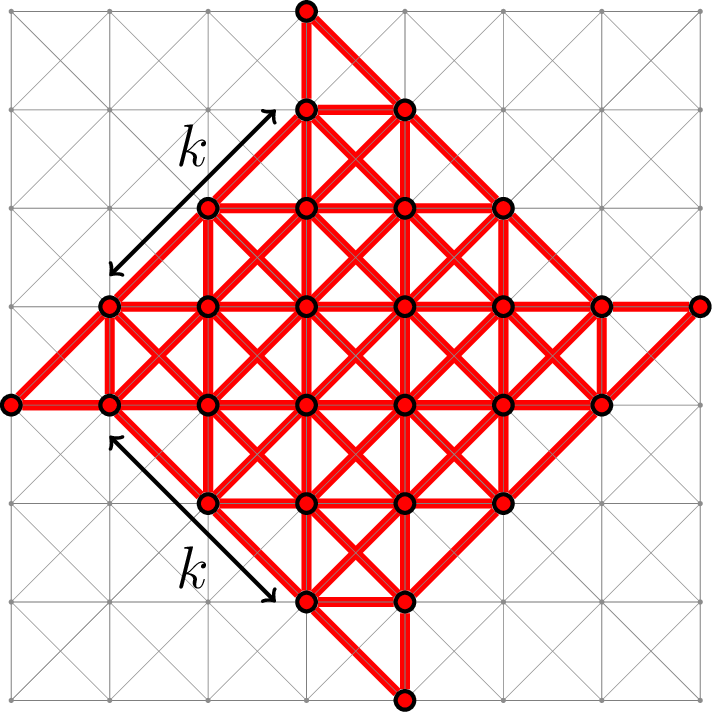}}{$H_3^{k}$}
  \caption{
  \heather{
  Graphs $H_1^{k}$, $H_2^{k}$, and $H_3^{k}$ in red, where $k=2$.
  Each graph is shown isometrically embedded into the King-grid, which is a strong product of two paths and is a particular Helly graph.}
  }
  \end{center} \vspace*{-.5cm}
  \end{figure} \label{fig:combinedFrames-hellified}


The injective hull here can be viewed as playing a similar role as the minors in the  grid minor theorem for treewidth by Robertson and Seymour.
Helly graphs play a similar role for the hyperbolicity as chordal graphs play for the treewidth.
Note that each of the graphs  $H_1^k$, $H_2^k$, $H_3^k$ contains a square grid of side $k$ (see Fig. \ref{fig:combinedFrames-hellified}) as an isometric subgraph. Thus, if the hyperbolicity of a Helly graph is large then it has a large square grid as an isometric subgraph. 
This result (along with a connection between the treewidth and the hyperbolicity established in  \cite{Coudert16}\footnote{In fact, \cite{Coudert16} establishes a relation between the treelength and the treewidth of a graph but according to \cite{Chepoi08} the hyperbolicity and the treelength are within a factor of $O(\log n)$ from each other.}) calls for an attempt to develop a theory similar to the bidimensionality theory (see survey \cite{Demaine2008} and papers cited therein). 
The bidimensionality theory builds on the graph minor theory of Robertson and Seymour by extending the mathematical results and
building new algorithmic tools.  Using algorithms for graphs of bounded treewidth as sub-routines (see also an earlier paper \cite{Eppstein2000}), it provides general techniques for designing efficient fixed-parameter  algorithms  and  approximation  algorithms 
for   NP-hard   graph   problems   in   broad   classes   of graphs.  This theory applies to graph problems that are "bidimensional``
in the sense that (1) the solution value for  the $k\times k$ grid  graph  (and  similar  graphs)  grows with $k$,  typically  as  $\Omega(k^2)$,  and  (2)  the  solution  value goes down when contracting edges, and optionally when deleting edges, in the graph. Examples of such problems include  feedback  vertex  set,  vertex  cover,  minimum maximal matching, face cover, a series of vertex-removal parameters, (edge)  dominating  set, connected (edge)   dominating   set, unweighted TSP tour, and chordal completion (fill-in). We are currently investigating this direction.

\commentout{For general graphs, finding ${\cal H}(G)$ is $O((diam(G)+1)^n)$ (c.f. \cite{Lang13} for details on the construction).
Given a Helly graph $H$, it is $O(n^4)$ to naively check if $H$ contains one of the three isometric subgraphs, by calculating the shortest path between all pairs and comparing the pairwise-distances between any four vertices in $H$.
However, the existence of the injective hull can be used to prove properties of importance in $G$ without constructing ${\cal H}(G)$. For example,
in \cite{Chepoi17} the existence of ${\cal H}(G)$ is used to show that any finite subset $X$ of vertices in a locally finite $\delta$-hyperbolic graph $G$ admits a core, that is, a disk $D(m,4\delta$) centered at vertex $m$,
which intercepts all shortest paths between at least one half of all pairs of vertices of $X$.
}

Previously, it was known that the hyperbolicity of median graphs is controlled by the size of isometrically embedded square grids (see \cite{Bandelt08,ChChHiOs2014}), and recently \cite{ChChHiOs2014} showed that the hyperbolicity of weakly modular graphs (a far reaching superclass of the Helly graphs) is controlled by the sizes of metric triangles and isometric square grids: if $G$ is a weakly modular graph in which any metric triangle is of side at most $\mu$ and any isometric square grid contained in $G$ is of side at most $\nu$, then  $G$ is $O(\nu + \mu)$-hyperbolic. Recall that three vertices $x,y,z$ of a graph form a metric triangle if for each vertex $v\in \{x,y,z\}$, any two shortest paths connecting it with the two other vertices from $\{w,y,z\}$ have only $v$ in common.
\heather{
Projecting this general result to Helly graphs (where $\mu\le 1$) one gets only that every Helly graph with isometric grids
of side at most $\nu$ is  $c\nu$-hyperbolic with a constant $c$ larger than 1 (about 8).
}

Injective hulls of graphs were recently used in \cite{Chepoi17} to prove a conjecture by Jonckheere et al.~\cite{JoLoBoBa} that real-world networks with small hyperbolicity have a core congestion.
It was shown \cite{Chepoi17} that any finite subset $X$ of vertices in a locally finite $\delta$-hyperbolic graph $G$ admits a disk $D(m,4\delta$) centered at vertex $m$,
which intercepts all shortest paths between at least one half of all pairs of vertices of $X$.

There has also been much related work on the characterization of $\delta$-hyperbolic graphs via forbidden isometric subgraphs - particularly, when $\delta=\frac{1}{2}$.
Koolen and Moulton \cite{Koolen02} provide such a characterization for $\frac{1}{2}$-hyperbolic bridged graphs via six forbidden isometric subgraphs.
Bandelt and Chepoi \cite{Bandelt03} generalize these results to all $\frac{1}{2}$-hyperbolic graphs via the same forbidden isometric subgraphs and the property that all disks of $G$ are convex.
Additionally, Coudert and Ducoffe \cite{Coudert14} prove that a graph is $\frac{1}{2}$-hyperbolic if and only if every graph power $G^i$ is $C_4$-free for $i\geq 1$, and one additional graph is $C_4$-free.
Brinkmann et al. \cite{BrKoMo2001} characterize $\frac{1}{2}$-hyperbolic chordal graphs via two forbidden isometric subgraphs. 
Wu and Zhang \cite{Wu11} prove that a 5-chordal graph is $\frac{1}{2}$-hyperbolic if and only if it does not contain six isometric subgraphs. 
Cohen et al. \cite{Cohen14} prove that a biconnected outerplanar graph is $\frac{1}{2}$-hyperbolic if and only if either it is isomorphic to $C_5$ or it is chordal and does not contain a forbidden subgraph.
\heather{
  We present a characterization of $\delta$-hyperbolic Helly graphs, for every $\delta$, with three forbidden isometric subgraphs. Further characterizations of Helly graphs with small hyperbolicity constant are deduced from our main result.
}

 %
 %
 \section{Preliminaries}
 \heather{We use the terminology and definitions as described in standard graph theory textbooks \cite{Diestel,West}.}
 All graphs $G=(V,E)$ appearing here are connected, finite, unweighted, undirected, loopless and without multiple edges.
 The {\em length of a path} from a vertex $v$ to a vertex $u$ is the number of edges in the path. The {\em distance} $d_G(u,v)$ between vertices $u$ and $v$ is the length of a shortest path connecting $u$ and $v$ in $G$. We omit the subscript when $G$ is known by context.  For a subset $A\subseteq V$, a
 subgraph $G(A)$ of a graph $G$ {\em induced} by $A$ is defined as $G(A)=(A,E')$ where
 $uv \in E'$ if and only if $u,v \in A$ and $uv \in E$. An induced subgraph $H$ of $G$ is {\em isometric} if the distance
 between any pair of vertices in $H$ is the same as their distance in $G$.
 The {\em $k$-th power} $G^k$ of $G$ is
 defined as $G^k =(V,E')$ where $E'=\{uv : u,v \in V$ and $d(u,v) \leq k\}$. A {\em disk} $D(v,r)$
 of a graph $G$ centered at a vertex $v \in V$ and with radius $r$ is the set of all vertices with
 distance no more than $r$ from $v$ (i.e., $D(v,r)=\{u\in V: d_G(v,u) \leq r \}$). For any two
 vertices $u$, $v$ of $G$, $I(u,v)= \lbrace z\in V:d(u,v)=d(u,z)+d(z,v) \rbrace$ is the
 (metric) {\sl interval} between $u$ and $v$, i.e., all vertices that lay on shortest
 paths between $u$ and $v$.

A family $\cal F$ of sets $S_i$ has the {\em Helly property} if for every subfamily ${\cal F'}$ of $\cal F$ the
following holds: if the elements of ${\cal F'}$ pairwise intersect, then the intersection of all elements of ${\cal F'}$
is also non-empty. A graph is called {\em Helly} if its family of
 all disks ${\cal D}(G)=\{D(v,r): v\in V, r\in {N}\}$ satisfies the Helly property. Note that two disks $D(v,p)$ and $D(u,q)$ intersect each other if and only if $d_G(u,v)\le p+q$. Two disks
 $D(v,p)$ and $D(u,q)$ of $G$ are said to  {\em see} each other, sometimes also referred to as {\em touching} each other, if they intersect or there is an edge in $G$ with one end in $D(v,p)$ and other end in $D(u,q)$ (equivalently, if $d_G(u,v)\le p+q+1$).
 The {\em strong product} of a set of graphs $G_i$ for $i=1,2,...,k$ is the graph $\boxtimes_{i=1}^kG_i$ whose vertex set
 is the Cartesian product of the vertex sets $V_i$, and there is an edge between vertices $a=(a_1,a_2,...,a_k)$ and $b=(b_1,b_2,...,b_k)$ if and only if
 $a_i$ is either equal or adjacent to $b_i$ for $i=1,2,...,k$. A {\em King-grid} is a strong product of two paths.
 King-grids form a natural subclass of  Helly graphs. 

 The following lemma will be frequently used in this paper.  It is true for a larger family of pseudo-modular graphs
 but we will use it in the context of Helly graphs. Pseudo-modular graphs are exactly the graphs where each family of three pairwise intersecting disks has a common intersection \cite{Bandelt86Modular}. Clearly, Helly graphs is a subclass of pseudo-modular graphs. 

 \begin{center}
 \centering\image{}{
   \includegraphics[scale=.55]{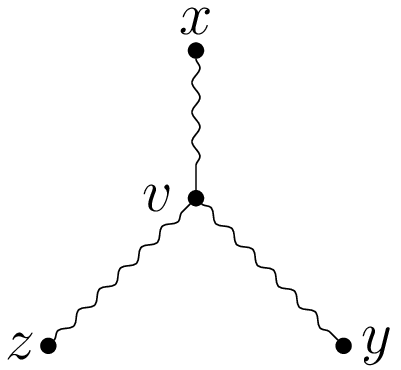}
 }\quad
 \image{}{
    \includegraphics[scale=.55]{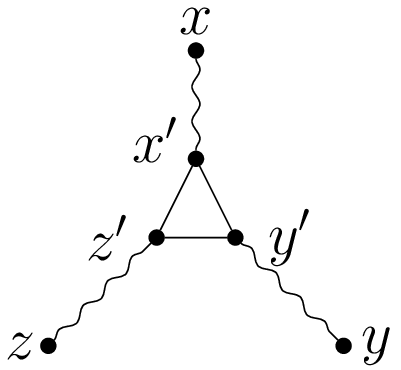}
 }
 \captionof{figure}{Vertices $x,y,z$ and three shortest paths connecting them in pseudo-modular graphs.}
 \label{fig:pseudomodular}
 \end{center}
 \setcounter{subfloat}{0}

\begin{lemma} [\cite{Bandelt86Modular}] \label{centroids}
   For every three vertices $x$, $y$, $z$ of a pseudo-modular graph $G$ there exist three shortest paths $P(x,y)$, $P(x,z)$,
   $P(y,z)$ connecting them such that either (1) there is a common vertex $v$ in $P(z,y) \cap P(x,z) \cap P(x,y)$
   or (2) there is a triangle $\bigtriangleup (x',y',z')$ in $G$ with edge $z'y'$ on $P(z,y)$,
   edge $x'z'$ on $P(x,z)$ and edge $x'y'$ on $P(x,y)$  (see Fig. \ref{fig:pseudomodular}). Furthermore, (1) is true if and only if $d(x,y)=p+q$, $d(x,z)=p+k$ and $d(y,z)=q+k$, for some $k,p,q\in N$, and (2) is true if and only if $d(x,y)=p+q+1$, $d(x,z)=p+k+1$ and $d(y,z)=q+k+1$, for some $k,p,q\in N$.
\end{lemma}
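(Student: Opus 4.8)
The plan is to begin with a parity dichotomy that already separates the two alternatives. Writing $a=d(x,y)$, $b=d(x,z)$, $c=d(y,z)$, note that $a+b+c$ is either even or odd: if it is even, the triangle inequalities make $p=(a+b-c)/2$, $q=(a+c-b)/2$, $k=(b+c-a)/2$ non-negative integers with $a=p+q$, $b=p+k$, $c=q+k$; if it is odd, then $a+b-c$, $a+c-b$, $b+c-a$ are positive odd integers (each has the parity of $a+b+c$ and is at least $1$), so the analogous formulas with an extra $-1$ produce $p,q,k\in N$ with $a=p+q+1$, $b=p+k+1$, $c=q+k+1$. Hence it is enough to prove that the even case forces alternative~(1) and the odd case forces alternative~(2): the stated equations determine $p,q,k$ uniquely and have opposite parity, so the ``furthermore'' equivalences (and the mutual exclusivity of the two alternatives) follow at once from the two forward implications.

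In the even case I would apply pseudo-modularity to the disks $D(x,p)$, $D(y,q)$, $D(z,k)$. They pairwise intersect, since in each pair the distance between the centers equals the sum of the radii (for instance $d(x,y)=p+q$). A common vertex $v$ of all three then satisfies $p+q=d(x,y)\le d(x,v)+d(v,y)\le p+q$, which forces $d(x,v)=p$, $d(v,y)=q$ and $v\in I(x,y)$; the other two pairs give, consistently, $v\in I(x,z)$ with $d(v,z)=k$ and $v\in I(y,z)$. Routing $P(x,y)$, $P(x,z)$, $P(y,z)$ through $v$ yields alternative~(1).

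The odd case carries the real content, and I would prove it by induction on $p+q+k$. If $p=q=k=0$, then $x$, $y$, $z$ are pairwise adjacent and form the required triangle (with the one-edge paths). Otherwise one of $p,q,k$ is positive; by symmetry say $k\ge 1$. Then $d(x,y)=p+q+1\le p+q+2k=(b-1)+(c-1)$, so the disks $D(z,1)$, $D(x,b-1)$, $D(y,c-1)$ pairwise intersect, and a common vertex of theirs is forced, by the same distance chase, to be a neighbor $z_1$ of $z$ with $d(x,z_1)=b-1$ and $d(y,z_1)=c-1$. The triple $x,y,z_1$ is again in the odd case, with parameters $p,q,k-1$, so the induction hypothesis supplies shortest paths $P(x,y)$, $P(x,z_1)$, $P(y,z_1)$ and a triangle whose three edges lie on them as prescribed; appending the edge $z_1z$ to $P(x,z_1)$ and to $P(y,z_1)$ turns these into shortest $x$--$z$ and $y$--$z$ paths, and since this edge is attached at the $z$-end the triangle's edges still sit on $P(x,y)$, $P(x,z)$, $P(y,z)$ respectively. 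This is alternative~(2). The two converse implications are read directly off the configurations: a vertex $v$ common to three shortest paths gives $a=d(x,v)+d(v,y)$, $b=d(x,v)+d(v,z)$, $c=d(y,v)+d(v,z)$, i.e.\ the even-type equations with $p=d(x,v)$, $q=d(y,v)$, $k=d(z,v)$; and a triangle $x'y'z'$ placed as in the statement gives $a=d(x,x')+1+d(y',y)$, and likewise for $b$ and $c$, i.e.\ the odd-type equations.

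The step I expect to be the main obstacle is the bookkeeping inside the inductive step of the odd case: one has to choose which of $x,y,z$ to ``pull inward'' according to which of $p,q,k$ is positive, verify in each instance that the three relevant disks (one of unit radius, two with a radius dropped by one) really do pairwise intersect so that pseudo-modularity applies, check that the reduced triple stays in the odd case with $p+q+k$ dropping by exactly one and no parameter going negative, and confirm that extending two of the shortest paths by a single edge does not disturb the triangle found by induction. None of this is deep, but keeping the case analysis and the non-negativity constraints straight is the delicate part; the rest is just repeated use of the defining Helly-type property of pseudo-modular graphs.
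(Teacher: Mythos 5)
This lemma is not proved in the paper at all: it is quoted from Bandelt and Mulder's work on pseudo-modular graphs, so there is no in-paper argument to compare yours against. On its own merits, your proof is correct and self-contained, and it uses exactly the definition of pseudo-modularity the paper adopts (every three pairwise intersecting disks have a common vertex). The parity split cleanly separates the two alternatives and, together with the two ``read-off'' converses, gives both the exclusivity and the ``furthermore'' equivalences. The even case is a one-shot application of the three-disk Helly property to $D(x,p)$, $D(y,q)$, $D(z,k)$, and the distance chase forcing $d(x,v)=p$, $d(y,v)=q$, $d(z,v)=k$ is right. The induction in the odd case also checks out: when $k\ge 1$ the disks $D(z,1)$, $D(x,b-1)$, $D(y,c-1)$ do pairwise intersect (the only nontrivial pair needs $p+q+1\le p+q+2k$, i.e.\ $k\ge 1$), the common vertex is forced to be a neighbour $z_1$ of $z$ with $d(x,z_1)=b-1$ and $d(y,z_1)=c-1$, the reduced triple is in the odd case with parameter sum one smaller, and a walk of length $d(x,z)$ obtained by appending $z_1z$ is automatically a shortest path on which the inherited triangle edge still lies. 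The one convention you use silently --- that in configuration (2) the corner $x'$ is the endpoint of the edges nearer to $x$, so that $d(x,x')$ is the same along $P(x,y)$ and $P(x,z)$ --- is the convention the paper itself relies on later (in its proof of Lemma~\ref{fnew}), so this is fine. In short: a correct proof of a result the paper only cites.
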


 We are  interested in hyperbolic graphs (sometimes referred to as graphs with a negative
 curvature).  $\delta$-Hyperbolic metric spaces have been defined by Gromov
 \cite{Gromov87} in 1987 via a simple 4-point condition: for any four
 points $u,v,w,x$, the two larger of the distance sums
 $d(u,v)+d(w,x), d(u,w)+d(v,x), d(u,x)+d(v,w)$ differ by at most
 $2\delta \geq 0$. They play an important role in geometric group theory and in
 the geometry of negatively curved spaces, and have recently become of
 interest in several domains of computer science, including
 algorithms and networking. A connected
 graph $G=(V,E)$ equipped with standard graph metric $d_G$ is
 $\delta$-{\it hyperbolic} if the metric space $(V,d_G)$ is
 $\delta$-hyperbolic. The smallest value $\delta$ for which $G$  is $\delta$-hyperbolic is
 called the {\em hyperbolicity} of $G$ and denoted by $hb(G)$.
 Let also $hb(u,v,w,x)$ ($u,v,w,x\in V$) denote one half of the difference between the two larger distance sums
 from $d(u,v)+d(w,x), d(u,w)+d(v,x), d(u,x)+d(v,w)$.
 The Gromov product of two vertices $x,y \in V$ with respect to a third vertex $z \in V$ is
 defined as $(x|y)_z = \frac{1}{2}(d(x,z) + d(y,z) - d(x,y))$.
 The focus of this paper is primarily on {\em $\delta$-hyperbolic Helly graphs} (i.e., graphs which
 satisfy the Helly property as well as have $\delta$ hyperbolicity).

Using the Gromov product, we can reformulate Lemma \ref{centroids}. 
Since $d(x,y)=(x|z)_y + (z|y)_x$, it is easy to check that for any three vertices $x,y,z$ of an arbitrary graph,
either all products  $(y|z)_x$,  $(y|x)_z$,  $(x|z)_y$ are integers or all are half-integers.

\begin{restatable}{lemma}{fnew}\label{fnew}
   For every three vertices $x$, $y$, $z$ of a pseudo-modular graph $G$ there exist three shortest paths $P(z,y)$, $P(x,z)$,
   $P(x,y)$ connecting them such that either (1) there is a common vertex $v$ in $P(z,y) \cap P(x,z) \cap P(x,y)$
   or (2) there is a triangle $\bigtriangleup (x',y',z')$ in $G$ with edge $z'y'$ on $P(z,y)$,
   edge $x'z'$ on $P(x,z)$ and edge $x'y'$ on $P(x,y)$  (see Fig. \ref{fig:pseudomodular}). Furthermore, (1) is true if and only if
   $(x|y)_z$ is an integer and $(x|y)_z=d(z,v)$, and (2) is true if and only if $(x|y)_z$ is a half-integer and $\lfloor (x|y)_z\rfloor=d(z,z')$.
\end{restatable}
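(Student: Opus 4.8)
\emph{Proof proposal.} The plan is to translate the arithmetic conditions of Lemma~\ref{centroids} into the language of Gromov products; no new structural input is needed. First I would record the three standard identities
$d(x,y)=(y\smid z)_x+(x\smid z)_y$, $\ d(x,z)=(y\smid z)_x+(x\smid y)_z$, and $\ d(y,z)=(x\smid z)_y+(x\smid y)_z$,
each obtained by expanding the definition $(a\smid b)_c=\tfrac12(d(a,c)+d(b,c)-d(a,b))$. Combined with the dichotomy already noted in the text --- that $(y\smid z)_x$, $(x\smid z)_y$, $(x\smid y)_z$ are either all integers or all half-integers --- and with the elementary fact that Gromov products are nonnegative, this is the only machinery required.

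Next I would invoke Lemma~\ref{centroids} to obtain the three shortest paths together with the alternative between (1) and (2), and then settle the ``furthermore'' part by cases. If $(x\smid y)_z$ is an integer, then $p:=(y\smid z)_x$, $q:=(x\smid z)_y$, $k:=(x\smid y)_z$ are nonnegative integers, and the identities above give exactly $d(x,y)=p+q$, $d(x,z)=p+k$, $d(y,z)=q+k$; hence by Lemma~\ref{centroids} alternative (1) holds (and (2) cannot, as (2) forces $d(x,y)+d(x,z)+d(y,z)$ to be odd). Symmetrically, if $(x\smid y)_z$ is a half-integer, then setting $p:=\lfloor(y\smid z)_x\rfloor$, $q:=\lfloor(x\smid z)_y\rfloor$, $k:=\lfloor(x\smid y)_z\rfloor$ the identities give $d(x,y)=p+q+1$, $d(x,z)=p+k+1$, $d(y,z)=q+k+1$, so alternative (2) holds. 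Since ``integer versus half-integer'' is an exhaustive and exclusive dichotomy, as is ``(1) versus (2)'', these two implications already yield both directions of each claimed equivalence.

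It remains to pin down $d(z,v)$ and $d(z,z')$. In alternative (1) the vertex $v$ lies on all three geodesics, so $d(x,v)+d(v,z)=d(x,z)$, $d(y,v)+d(v,z)=d(y,z)$, and $d(x,v)+d(v,y)=d(x,y)$; adding the first two and subtracting the third gives $d(z,v)=\tfrac12\bigl(d(x,z)+d(y,z)-d(x,y)\bigr)=(x\smid y)_z$. In alternative (2) I would read off from the placement of the triangle in Lemma~\ref{centroids} (see Fig.~\ref{fig:pseudomodular}) that $z'$ lies between $z$ and $x'$ on $P(x,z)$ and between $z$ and $y'$ on $P(z,y)$, so $d(z,x')=d(z,y')=d(z,z')+1$; substituting $d(x,x')=d(x,z)-d(z,z')-1$, $d(y,y')=d(y,z)-d(z,z')-1$, and $d(x',y')=1$ into $d(x,y)=d(x,x')+d(x',y')+d(y',y)$ yields $d(z,z')=\tfrac12\bigl(d(x,z)+d(y,z)-d(x,y)-1\bigr)=(x\smid y)_z-\tfrac12=\lfloor(x\smid y)_z\rfloor$. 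The whole argument is bookkeeping; the only point demanding care is confirming in alternative (2) that $z'$ is indeed the vertex of the triangle nearer to $z$ on both $P(x,z)$ and $P(z,y)$ (so that the edge $x'y'$ is the side ``opposite'' $z$), after which the formula for $d(z,z')$ drops out. I anticipate no genuine obstacle beyond keeping the roles of $(y\smid z)_x$, $(x\smid z)_y$, $(x\smid y)_z$ (equivalently $p$, $q$, $k$) consistently matched to $x$, $y$, $z$.
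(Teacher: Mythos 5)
Your proposal is correct and follows essentially the same route as the paper's proof: both translate the distance conditions of Lemma~\ref{centroids} into Gromov products via the identities $d(x,z)-(x\smid y)_z=(z\smid y)_x$ etc., use the integer/half-integer dichotomy to decide between alternatives (1) and (2), and extract $d(z,v)=(x\smid y)_z$ and $d(z,z')=\lfloor(x\smid y)_z\rfloor$ by the same bookkeeping along the geodesics. The only cosmetic difference is that you derive one direction of each equivalence and appeal to the exclusivity of the two dichotomies, whereas the paper proves each direction explicitly; the content is identical.
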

\begin{proof}
\heather{
  Let $\alpha_z=(x \smid y)_z$, $\alpha_x =(z|y)_x$, and $ \alpha_y = (z|x)_y$.}
  We have $d(z,x)-\alpha_z= d(z,x)- \frac{1}{2}(d(x,z)+d(y,z)-d(x,y))=\frac{1}{2}(d(x,z)+d(x,y)-d(y,z))=(z|y)_x$. Similarly, $(z|x)_y=d(z,y)-\alpha_z$.
\heather{Therefore, $\alpha_x$ and $ \alpha_y$ are integers if and only if $\alpha_z$ is an integer.}

\heather{
  Let there be a common vertex $v$ in $P(z,y) \cap P(x,z) \cap P(x,y)$.
  By definition, $2\alpha_z = (d(x,v) + d(v,z)) + (d(z,v) + d(v,y)) - (d(x,v) + d(v,y)) = 2d(v,z)$.
  Thus, $d(v,z) = \alpha_z$, and since $d(v,z)$ is an integer then $\alpha_z$ is an integer.
  The converse follows from Lemma~\ref{centroids} for $p=\alpha_x$, $q=\alpha_y$ and $k=\alpha_z$.

  Let there be a triangle $\bigtriangleup (x',y',z')$ in $G$ with edge $z'y'$ on $P(z,y)$,
  edge $x'z'$ on $P(x,z)$ and edge $x'y'$ on $P(x,y)$.
  By definition, $2\alpha_z = (d(x,x') + 1 + d(z',z)) + (d(z,z') + 1 + d(y',y)) - (d(x,x') + 1 + d(y',y)) = 2d(z,z') + 1$.
  Thus, $d(z,z') = \lfloor \alpha_z \rfloor$, and since $2d(z,z') + 1$ is odd then $\alpha_z$ is a half-integer.
  The converse follows from Lemma~\ref{centroids} for
  $p=\lfloor\alpha_x\rfloor$, $q=\lfloor\alpha_y\rfloor$ and $k=\lfloor\alpha_z\rfloor$. \qed

}


\end{proof}

The set $S_k(x,y)=\{z \in I(x,y) : d(z,x) = k \}$ is called a \emph{slice} of the interval from $x$ to~$y$. The diameter of a slice $S_k(x,y)$ is the maximum distance in $G$ between any two vertices of $S_k(x,y)$. An interval $I(x,y)$ is said to be $\tau$-thin if diameters of all slices $S_k(x,y)$, $k\in N$, of it are at most $\tau$. A graph $G$ is said to have $\tau$-thin intervals if all intervals of $G$ are $\tau$-thin. The smallest $\tau$ for which
all intervals of $G$ are $\tau$-thin is called the {\em interval thinness} of $G$ and denoted by  $\tau(G)$. That is,
$$\tau(G)=\max\{d(u,v): u,v\in S_k(x,y), x,y\in V, k\in N\}.$$
The following lemma is a folklore and easy to show using the definition of hyperbolicity.

 \begin{restatable}{lemma}{halfthin} \label{half-thin}
  For any graph $G$, $\tau(G) \leq {2}hb(G)$.
 \end{restatable}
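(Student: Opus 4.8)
The plan is to unwind the definition of interval thinness and then apply Gromov's four-point condition directly to a carefully chosen quadruple. Fix any two vertices $u,v$ realizing the maximum in the definition of $\tau(G)$, so that $u,v\in S_k(x,y)$ for some $x,y\in V$ and some $k\in N$. By definition of a slice, $d(x,u)=d(x,v)=k$, and since both $u$ and $v$ lie on shortest $(x,y)$-paths, $d(x,u)+d(u,y)=d(x,v)+d(v,y)=d(x,y)$; hence $d(u,y)=d(v,y)=d(x,y)-k$.

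Next I would compute the three distance sums for the quadruple $x,y,u,v$. Two of them, $d(x,u)+d(y,v)$ and $d(x,v)+d(y,u)$, both equal $k+(d(x,y)-k)=d(x,y)$, while the third, $d(x,y)+d(u,v)$, is at least $d(x,y)$ since $d(u,v)\ge 0$. Therefore the third sum is among the two largest, and the difference between the two largest sums is exactly $(d(x,y)+d(u,v))-d(x,y)=d(u,v)$.

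Finally, applying the definition of hyperbolicity with $\delta=hb(G)$ to this quadruple, the two larger distance sums differ by at most $2\,hb(G)$, so $d(u,v)\le 2\,hb(G)$. Since $u,v$ were chosen to realize $\tau(G)$, this gives $\tau(G)\le 2\,hb(G)$, as claimed. There is no substantial obstacle; the only minor point to mention is that when $u=v$ the two ``equal'' sums are themselves the two largest, but then the desired inequality $d(u,v)=0\le 2\,hb(G)$ holds trivially, so the argument goes through in all cases.
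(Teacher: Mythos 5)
Your proof is correct and follows essentially the same route as the paper's: both apply the four-point condition to the quadruple $x,y,u,v$ with $u,v$ in a common slice of $I(x,y)$, observe that two of the three distance sums equal $d(x,y)$ while the third is $d(x,y)+d(u,v)$, and conclude $d(u,v)\le 2\,hb(G)$. Your extra remark about the degenerate case $u=v$ is fine but not needed.
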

 \begin{proof}
 Consider any interval $I(x,y)$ in $G$ and arbitrary two vertices $u,v\in S_k(x,y)$. Consider the three distance sums
 $S_1=d(x,y)+d(u,v)$, $S_2=d(x,u)+d(y,v)$, $S_3=d(x,v)+d(y,u)$. As $u,v\in S_k(x,y)$, we have $S_2=S_3=d(x,y)\leq S_1$. Hence,
 $2 hb(G)\ge S_1-S_2= d(x,y)+d(u,v)-d(x,y)=d(u,v)$ for any two vertices from the same slice of $G$, i.e., $2 hb(G)\ge \tau(G)$.\qed
 \end{proof}

%

 %
 %
 \section{Thinness of intervals governs the hyperbolicity of a Helly graph} \label{sec:thinness}
 \heather{
 A qualitative relationship between hyperbolicity and thinness of intervals is easy to show even for a more general class of median graphs.
 The true contribution of our paper is more quantitative.
 In fact, we obtain the exact relationship between the two.
 }
 We focus now on demonstrating that the converse of Lemma \ref{half-thin} for Helly graphs is also
 \heather{true such that the value of~$2hb(G)$ is upper bounded by~$\tau(G) + 1$.
 }
 Note that, for general graphs $G$,
 the values of~$\tau(G)$ and~${2}hb(G)$ can be very far from each other. Consider
 an odd cycle with $4k+1$ vertices; each pair of vertices has a unique shortest path, so no two vertices are in the same slice. Thus $\tau(G)=0$ and $2hb(G)=2k$.

  \begin{center}
 \centering\image{}{
   \includegraphics[scale=.57]{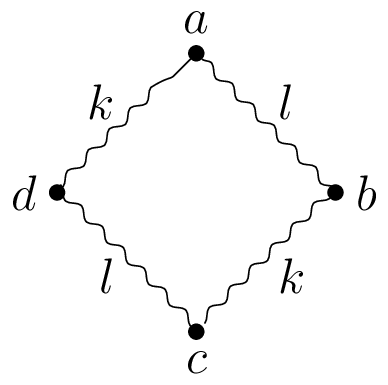}
 }\quad
 \image{}{
   \includegraphics[scale=.57]{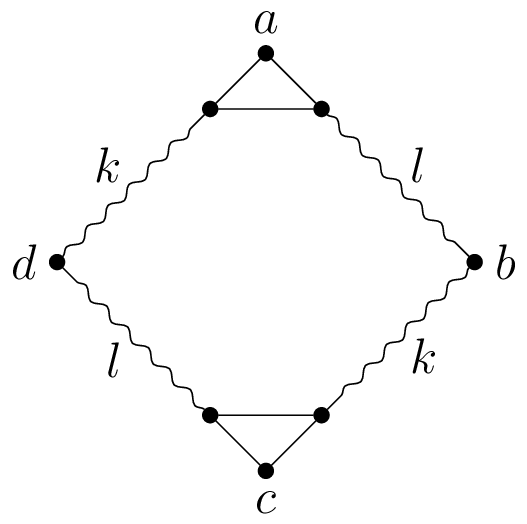}
 }\quad
 \image{}{
   \includegraphics[scale=.57]{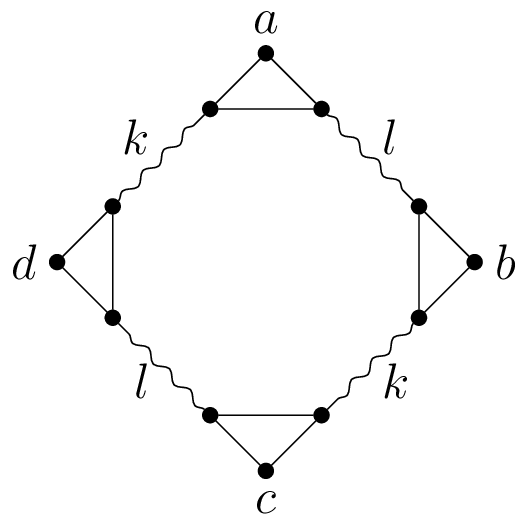}
 }
 \captionof{figure}{$\{a,b,c,d\}$-distance preserving subgraphs.}
 \label{fig:unhellified-frames}
 \end{center}
 \setcounter{subfloat}{0}

We say that a graph $G'=(V',E')$ with $\{a,b,c,d\}\subset V'$ is an {\em $\{a,b,c,d\}$-distance-preserving subgraph} of a graph $G$  if
$d_G(x,y)=d_{G'}(x,y)$ for every pair of vertices $x,y$ from $\{a,b,c,d\}$. On Fig. \ref{fig:unhellified-frames}(c), an  $\{a,b,c,d\}$-distance-preserving subgraph of a graph $G$  with $d_G(a,c)=d_G(b,d)=k+l+3$, $d_G(a,b)=d_G(c,d)=l+2$, and $d_G(b,c)=d_G(d,a)=k+2$ is shown.

  \begin{restatable}{lemma}{hyperbolicitylessthanthinness}\label{hyperbolicity-less-than-thiness}
  \heather{
  For every Helly graph $G$, $2hb(G) \leq \tau(G)+1$. Furthermore, for any Helly graph $G$, $2hb(G)=\tau(G)+1$}
   if and only if $\tau(G)$ is odd and there exists in $G$ an $\{a,b,c,d\}$-distance-preserving subgraph
   \heather{for some $\{a,b,c,d\}$ as depicted} on Fig. \ref{fig:unhellified-frames}(c) with $k=l=\lfloor\frac{\tau(G)}{2}\rfloor$.
 \end{restatable}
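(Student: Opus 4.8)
\medskip\noindent\textbf{Proof proposal.}
The plan is to use Lemma~\ref{half-thin}, which already gives $\tau(G)\le 2hb(G)$, so that the whole statement reduces to proving $2hb(G)\le\tau(G)+1$ (equivalently $\tau(G)\ge 2hb(G)-1$) together with a characterization of when the value $2hb(G)-1$ is attained.

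\emph{The inequality.} I would start from a quadruple $u,v,w,x$ with $hb(u,v,w,x)=hb(G)$ and relabel its vertices so that $d(u,x)+d(v,w)\ge d(u,w)+d(v,x)\ge d(u,v)+d(w,x)$; then $2hb(G)=\bigl(d(u,x)+d(v,w)\bigr)-\bigl(d(u,w)+d(v,x)\bigr)$. Apply Lemma~\ref{fnew} to the triples $\{u,v,x\}$ and $\{u,w,x\}$: each produces a ``centroid'' --- a single vertex in case~(1), a triangle in case~(2) --- lying on a shortest $u$--$x$ path, namely a vertex $A'$ at distance $\lfloor(v|x)_u\rfloor$ from $u$ that is within $\lfloor(u|x)_v\rfloor+1$ of $v$, and a vertex $A''$ at distance $\lfloor(w|x)_u\rfloor$ from $u$ within $\lfloor(u|x)_w\rfloor+1$ of $w$. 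Sliding the one of $A',A''$ that is closer to $u$ along a shortest $u$--$x$ path until it lands in the same slice $S_k(u,x)$ as the other (this is legal since $(v|x)_u,(w|x)_u\le d(u,x)$), and using $d(v,w)\le d(v,A')+d(A',A'')+d(A'',w)$ together with the identity $d(v,w)-(u|x)_v-(u|x)_w=2hb(G)+\tfrac12\bigl((d(u,w)+d(v,x))-(d(u,v)+d(w,x))\bigr)$, I get two vertices of a common slice at distance at least $2hb(G)-1$; hence $\tau(G)\ge 2hb(G)-1$. A careful tracking of the integer/half-integer dichotomy of Lemma~\ref{fnew} (and, when needed, of the choice between the two centroid-triangle vertices lying on the geodesic) shows that the loss of $1$ occurs \emph{only} when both triples fall into case~(2) \emph{and} the two distance sums other than $d(u,x)+d(v,w)$ are equal; in every other situation the same computation yields $\tau(G)\ge 2hb(G)$.

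\emph{The direction ``$\Leftarrow$''.} This is immediate: if $\tau(G)$ is odd and $G$ contains an $\{a,b,c,d\}$-distance-preserving subgraph as in Fig.~\ref{fig:unhellified-frames}(c) with $k=l=\lfloor\tau(G)/2\rfloor$, then the six distances $d(a,c)=d(b,d)=2k+3$, $d(a,b)=d(c,d)=d(b,c)=d(a,d)=k+2$ are realized in $G$, so evaluating the $4$-point condition on $a,b,c,d$ gives two larger sums $2(k+l+3)$ and $2(k+2)$ differing by $2k+2$, whence $hb(G)\ge hb(a,b,c,d)=k+1=(\tau(G)+1)/2$; combined with the inequality just proved, $2hb(G)=\tau(G)+1$.

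\emph{The direction ``$\Rightarrow$'', the main obstacle.} Assume $2hb(G)=\tau(G)+1$, i.e.\ $\tau(G)=2hb(G)-1$. By the refined inequality, \emph{every} quadruple realizing $hb(G)$ (with the above relabeling) must be in the ``double case~(2) with equal secondary sums'' situation, and moreover every inequality in the estimate must be tight: $v,A',A'',w$ must lie on a common shortest path, $A'$ and $A''$ must already lie in a common slice $S_{k'}(u,x)$ with $d(A',A'')=2hb(G)-1$, and the relevant Gromov products must be exactly half-integers --- which forces $\tau(G)$ odd and identifies $\lfloor\tau(G)/2\rfloor$ with the common value $k=l$ of the frame to be produced. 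It then remains to exhibit, inside $G$, four vertices whose six pairwise distances are those of Fig.~\ref{fig:unhellified-frames}(c) with $k=l=\lfloor\tau(G)/2\rfloor$ together with shortest paths between them forming that depicted subgraph. I expect this last step to be the hard one: starting from the geodesics and from the slice pair $A',A''$ (with their neighbours $C'\sim A'$, $C''\sim A''$) supplied above, one has to select the corner vertices $a,c$ on a shortest $u$--$x$ path flanking the slice $S_{k'}(u,x)$ and the corner vertices $b,d$ near $A',A''$, and then fill in the remaining ``thick'' grid-like interior; doing this so that all shortest paths fit together into the Fig.~\ref{fig:unhellified-frames}(c) pattern is where the Helly property is used in an essential way (pseudo-modularity being insufficient), presumably via repeated applications of the Helly property to families of pairwise-seeing disks centred at these vertices with radii read off from the tight distances. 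The bookkeeping of the case~(1)/case~(2) distinction of Lemma~\ref{fnew} throughout, and the verification that the constructed subgraph is exactly the one depicted, are the delicate points.
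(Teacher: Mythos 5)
Your treatment of the inequality $2hb(G)\le\tau(G)+1$ follows essentially the paper's route: apply the pseudo-modularity lemma to the two triples sharing the pair of the dominant sum, place the resulting centroids (a vertex or a triangle corner) on shortest $u$--$x$ paths, and compare slices of $I(u,x)$. One technical caveat: the one-sided slide as you describe it yields only $2hb\le\tau+1$ in the mixed case where the case-(2) triple is the one with the \emph{smaller} Gromov product at $u$; there you must run the same estimate from the $x$-side instead (this is presumably the ``choice between the two centroid-triangle vertices'' you allude to, and the paper disposes of it with an ``up to symmetry''). Your $\Leftarrow$ direction is correct and matches the paper.

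The $\Rightarrow$ direction, which is the substance of the ``furthermore'' clause, has two genuine gaps. First, the assertion that in the tight case ``the relevant Gromov products must be exactly half-integers --- which forces $\tau(G)$ odd'' is a non sequitur: half-integrality of $(u|x)_v$ and $(u|x)_w$ says nothing about the parity of $\tau(G)=d(A',A'')$, and nothing in the slice bookkeeping excludes $2hb(G)=\tau(G)+1$ with $\tau(G)$ even. The paper needs a separate contradiction argument here: writing $\tau=2k$ and $p=d(a,A')=d(a,A'')$, the three pairwise intersecting disks $D(a,p-k)$, $D(A',k)$, $D(A'',k)$ (and the symmetric triple at $c$) produce, by the Helly property, vertices $a^*$ and $c^*$ lying in a common slice of $I(b,d)$; then $d(a^*,c^*)\le 2k$ forces $d(a,c)\le p+q$, contradicting $d(a,c)=p+q+1$. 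Without this, the ``only if $\tau(G)$ is odd'' half of the statement (and Corollary~\ref{cor:thinness-hyperbolicity}) is unproven. Second, the construction of the Fig.~\ref{fig:unhellified-frames}(c) distance-preserving subgraph is exactly the step you defer, and your guess at the mechanism (repeated Helly applications to families of disks) is not what is needed: since in the surviving case $d(A',A'')=\tau=2k+1$ is odd and $d(a,A')=d(a,A'')$, Lemma~\ref{centroids} applied to $\{a,A',A''\}$ (and to $\{c,C',C''\}$) yields two further triangles, and these, together with the two triangles already supplied by the double case~(2) and the geodesic segments of length $k$ joining them, assemble directly into the required pattern with $k=l=\lfloor\tau(G)/2\rfloor$ --- pseudo-modularity alone suffices for this assembly. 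So the plan is sound, but the two arguments that actually carry the characterization of equality are missing or incorrectly justified.
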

 \begin{proof}
 Consider arbitrary four vertices $a,b,c,d$ with $hb(G)=hb(a,b,c,d)=:\delta$ and let $d_G(a,c)+d_G(b,d)\geq d_G(a,b)+d_G(c,d)\ge d_G(a,d)+d_G(b,c)$.
 Let also $\tau:=\tau(G)$.
 \heather{
  We apply Lemma~\ref{centroids} once to vertices $\{a,b,c\}$ and again to vertices $\{a,d,c\}$.
  By Lemma~\ref{centroids}, a set of three vertices with some shortest paths connecting them define either configuration (1) or configuration (2) from  Fig.~\ref{fig:pseudomodular}. Hence, there are three cases, up-to symmetry, to consider. 
  
 }

  \begin{figure}[htb]
     \vspace*{-3mm}
    \begin{center} %
      \begin{minipage}[b]{4cm}
        \begin{center} \includegraphics[height=2.9cm]{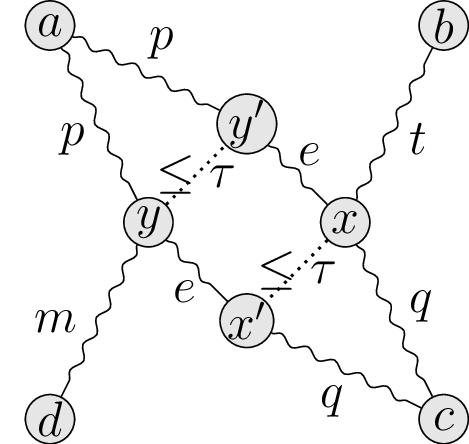}
        \end{center}
        \caption{\label{fig:thinness-to-hyperbolicity-c1} Illustration for Case \ref{case1} for the proof of Lemma \ref{hyperbolicity-less-than-thiness}.} %
      \end{minipage}
      \hspace*{1cm}
      \begin{minipage}[b]{4.1cm}
        \begin{center} \includegraphics[height=3.05cm]{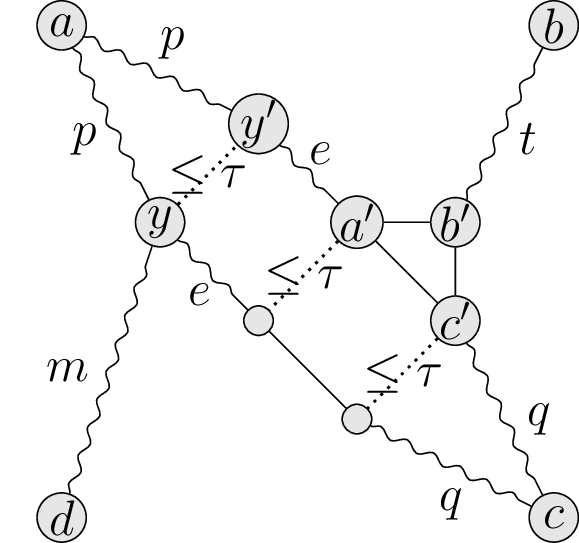}
        \end{center}
        \caption{\label{fig:thinness-to-hyperbolicity-c2} Illustration for Case \ref{case2} for the proof of Lemma \ref{hyperbolicity-less-than-thiness}.} %
      \end{minipage}
      \hspace*{1cm}
      \begin{minipage}[b]{4.1cm}
         \begin{center} \includegraphics[height=3.19cm]{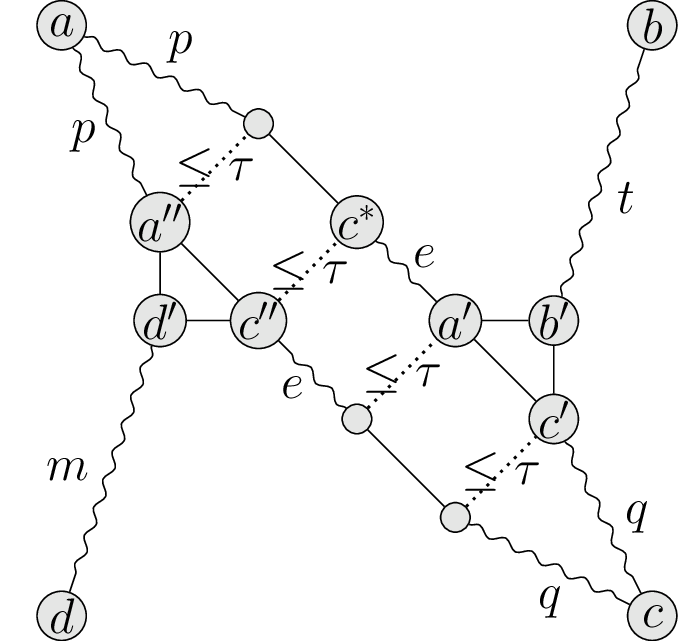}
         \end{center}
         \caption{\label{fig:thinness-to-hyperbolicity-c3} Illustration for Case \ref{case3} for the proof of Lemma \ref{hyperbolicity-less-than-thiness}.}
       \end{minipage}
    \end{center} \vspace*{-3mm}
  \end{figure}

 \begin{case}\label{case1} {\em For vertices $a,b,c$ there are three shortest paths $P(a,b)$, $P(b,c)$, $P_b(a,c)$ that share a common vertex $x$.
 For vertices $a,d,c$ there are three shortest paths $P(a,d)$, $P(d,c)$, $P_d(a,c)$ that share a common vertex
 $y$.
 \heather{
 Note that we use the notation $P_b(a,c)$ and $P_d(a,c)$ here to distinguish between the two shortest $(a,c)$-paths that exist by applying Lemma~\ref{centroids} to vertices $a,b,c$ and again separately to vertices $a,d,c$, respectively.}
 }

This situation is shown on Fig.~\ref{fig:thinness-to-hyperbolicity-c1}. It is unknown if $x$ and $y$ are on the same slice of $I(a,c)$ or not, so we consider vertices $y'\in P_b(a,c)$ and $x'\in P_d(a,c)$ with $d_G(a,y)=d_G(a,y')=:p$ and $d_G(c,x)=d_G(c,x')=:q$. Set also $e:=d_G(y',x)=d_G(y,x')$, $m:=d_G(d,y)$, $t:=d_G(b,x)$ (see Fig.~\ref{fig:thinness-to-hyperbolicity-c1}).
 Vertices $x, x'$ lie on the same slice of $I(a,c)$, as do $y, y'$. Given that intervals of $G$ are $\tau$-thin, we get $2 \delta = d_G(a,c)+d_G(b,d)- (d_G(a,b)+d_G(c,d))\leq p+e+q+t+\tau+e+m -(p+e+t+m+e+q)=\tau$, i.e., \heather{$2\delta\le \tau$.}
 \end{case}

 \begin{case}\label{case2}  {\em For vertices $a,b,c$  there are three shortest paths $P(a,b)$, $P(b,c)$, $P_b(a,c)$ and a triangle $\bigtriangleup (b',c',a')$ in $G$ with edge $a'b'$ on $P(a,b)$, edge $b'c'$ on $P(b,c)$ and edge $a'c'$ on $P_b(a,c)$. For vertices $a,d,c$ there are three shortest paths $P(a,d)$, $P(d,c)$, $P_b(a,c)$ that share a common vertex $y$.}

 This situation is shown on Fig.~\ref{fig:thinness-to-hyperbolicity-c2}.
 \heather{Consider vertex $y' \in P_b(a,c)$ such that $p := d_G(a,y')=d_G(a,y)$, and let now $q := d_G(c,c')$, set $e := d_G(y',a')$,
 and $t := d_G(b,b')$.}
 Since intervals of $G$ are $\tau$-thin, we get $2\delta = d_G(a,c)+d_G(b,d)- (d_G(a,b)+d_G(c,d)) \leq p+e+1+q+t+1+\tau+e+m - (p+e+1+t+m+e+1+q) = \tau$, i.e., \heather{$2\delta\le \tau$}.
 \end{case}

   \begin{figure}[htb]
      \vspace*{-3mm}
     \begin{center} %
        \begin{minipage}[b]{4.3cm}
         \begin{center} \includegraphics[height=3.4cm]{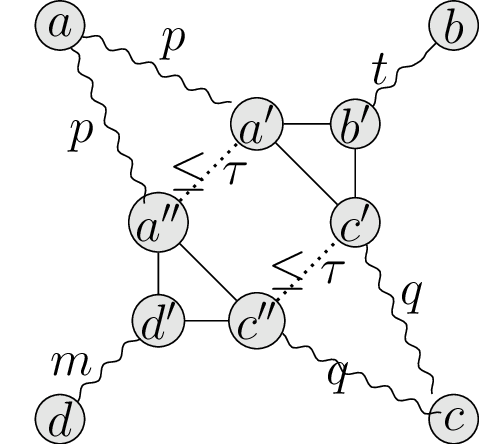}
         \end{center}
         \caption{\label{fig:thinness-to-hyperbolicity-c4} A special subcase of Case \ref{case3}.} %
       \end{minipage}
       \hspace*{0.3cm}
       \begin{minipage}[b]{4.3cm}
         \begin{center} \includegraphics[height=3.4cm]{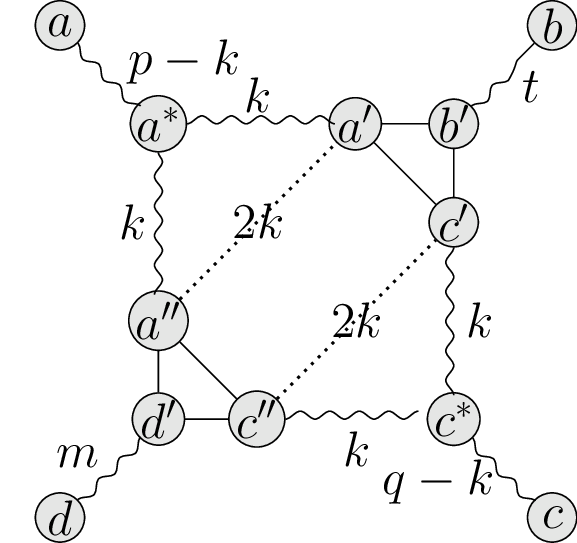}
         \end{center}
         \caption{\label{fig:thinness-to-hyperbolicity-c5} When $\tau(G)=2k$, $\delta(a,b,c,d)=k$.} %
       \end{minipage}
     \end{center} \vspace*{-3mm}
   \end{figure}

 \begin{case}\label{case3}  {\em For vertices $a,b,c$  there are three shortest paths $P(a,b)$, $P(b,c)$, $P_b(a,c)$
  and a triangle $\bigtriangleup (b',c',a')$ in $G$ with edge $a'b'$ on $P(a,b)$, edge $b'c'$ on $P(b,c)$ and
  edge $a'c'$ on $P_b(a,c)$. For vertices $a,d,c$ there are three shortest paths $P(a,d)$, $P(d,c)$,
  $P_d(a,c)$ and a triangle $\bigtriangleup (d',c'',a'')$ in $G$ with edge $a''d'$
  on $P(a,d)$, edge $d'c''$ on $P(d,c)$ and edge $a''c''$ on $P_d(a,c)$.}

 This situation is shown on Fig. \ref{fig:thinness-to-hyperbolicity-c3}.
If vertices $a',a''$ are not in the same slice of $I(a,c)$ then
\heather{set $p := d_G(a,a'')$, and let vertex $c^*$ denote the vertex on $P_b(a,c)$ such that $d_G(a,c^*) = p + 1$.
Set $e := d_G(c^*,a')$ and $m := d_G(d, d')$.
Then, $2\delta = d_G(a,c)+d_G(b,d)- (d_G(a,b)+d_G(c,d)) \le p+1+e+1+q+t+1+e+\tau+1+m -(p+1+e+1+t+q+1+e+1+m) = \tau$.}

 If vertices $a',a''$ are in the same slice of $I(a,c)$ (see Fig.
 \ref{fig:thinness-to-hyperbolicity-c4} for this special subcase; only in this
 subcase \heather{we may have $2\delta = \tau+1$}) then, using notations from Fig.
 \ref{fig:thinness-to-hyperbolicity-c4},
 \heather{
 $2\delta= d_G(a,c)+d_G(b,d)- (d_G(a,b)+d_G(c,d)) \le p+1+q+t+1+\tau+1+m -(p+1+t+m+1+q) =  \tau+1$.
 Furthermore, if $2\delta = \tau+1$}, then $d_G(a',a'')=d_G(c',c'')=\tau$.

 \heather{Assume that $2\delta = \tau+1$} and $\tau$ is even (see Fig. \ref{fig:thinness-to-hyperbolicity-c5}).  Let $\tau=2k$. Consider disks $D(a,p-k)$,$D(a'',k)$, $D(a',k)$ in $G$. These disks pairwise intersect. Hence, there must exist a vertex $a^*$ at distance $p-k$ from $a$ and at distance $k$ from both $a'$ and $a''$. Similarly, there is a vertex $c^*$ in $G$ at distance $q-k$ from $c$ and at distance $k$ from both $c'$ and $c''$. These vertices $a^*$ and $c^*$ belong to slice $S_{t+1+k}(b,d)$ of $I(b,d)$. Hence, $d_G(a^*,c^*)\le \tau=2k$ must hold. On the other hand, $p+1+q=d_G(a,c)\leq d_G(a,a^*)+d_G(a^*,c^*)+d_G(c^*,c)\le p-k+2k+q-k=p+q$, a contradiction. Thus, when $\tau$ is even, \heather{$2\delta=\tau$.}

 \heather{Assume now that $2\delta = \tau+1$} and $\tau$ is odd.  Let $\tau=2k+1$.  As $d_G(a',a'')=2k+1$ and $d_G(a,a')=d_G(a,a'')=p$, by Lemma \ref{centroids}, there must exist three shortest paths $P(a,a')$, $P(a,a'')$, $P(a',a'')$ and a triangle $\bigtriangleup (x,y,z)$ in $G$ with edge $xy$ on $P(a,a')$, edge $xz$ on $P(a,a'')$ and edge $zy$ on
 $P(a',a'')$ (note that $P(a,a')$, $P(a,a'')$, $P(a',a'')$ cannot have a common vertex because of distance requirements). Similarly,
 there must exist three shortest paths $P(c,c')$, $P(c,c'')$, $P(c',c'')$ and a triangle $\bigtriangleup (u,v,w)$ in $G$
 with edge $uv$ on $P(c,c')$, edge $uw$ on $P(c,c'')$ and edge $vw$ on $P(c',c'')$. Thus, by distance requirements,
 four triangles $\bigtriangleup (x,y,z)$, $\bigtriangleup (a',b',c')$, $\bigtriangleup (u,v,w)$, $\bigtriangleup (d',a'',c'')$
 with corresponding shortest paths $P(y,a')\subseteq P(a,a')$, $P(a'',z)\subseteq P(a'',a)$, $P(c'',w)\subseteq P(c'',c)$, $P(c',v)\subseteq P(c',c)$
 of length $k=\lfloor\frac{\tau(G)}{2}\rfloor$ each form in $G$ an $\{x,b',u,d'\}$-distance-preserving subgraph isomorphic to the one depicted on Fig. \ref{fig:unhellified-frames}(c) with $k=l$.

 To complete the proof, it is enough to verify  that if $\tau(G)$ is odd and there exists in $G$ an $\{a,b,c,d\}$-distance-preserving subgraph depicted on Fig. \ref{fig:unhellified-frames}(c) with $k=l=\lfloor\frac{\tau(G)}{2}\rfloor$, then \heather{we obtain $2hb(a,b,c,d)=\tau(G)+1$.} \qed
 \end{case}
 \end{proof}


The following lemmas prove that the three $\{a,b,c,d\}$-distance preserving subgraphs shown in Fig. \ref{fig:unhellified-frames}
can be isometrically embedded into three Helly graphs termed $H_1^{k,l}$, $H_2^{k,l}$, and $H_3^{k,l}$, respectively.
\heather{
Each of $H_1^{k,l}$, $H_2^{k,l}$, and $H_3^{k,l}$ is an isometric subgraph of a King-grid  (Fig. \ref{fig:frames-hellified} gives small examples for $k=l=2$).
Each is induced by the vertices in red as demonstrated in Fig.~\ref{fig:frames-hellified}(a), Fig.~\ref{fig:frames-hellified}(b), and Fig.~\ref{fig:frames-hellified}(c)
such that its four extreme vertices correspond to the four extreme vertices of
an $\{a,b,c,d\}$-distance preserving subgraph shown in Fig.~\ref{fig:unhellified-frames}(a), Fig.~\ref{fig:unhellified-frames}(b), and Fig.~\ref{fig:unhellified-frames}(c), respectively.
In the description that follows let vertices of the form $x_y$ and $x_z$ denote neighbors
which are adjacent to vertex $x$
such that $x_y \in I(x,y)$ and $x_z \in I(x,z)$.
Thus, in $H := H_1^{k,l}$
we have that $d_H(a,d)=d_H(b,c)=k$ and $d_H(a,b)=d_H(d,c)=l$.
In $H := H_2^{k,l}$,
we have that $d_H(a_d,d)=d_H(b,c_b)=k$ and $d_H(a_b,b)=d_H(d,c_d)=l$.
Finally, in $H := H_3^{k,l}$,
we have that $d_H(a_d,d_a)=d_H(b_c,c_b)=k$ and $d_H(a_b,b_a)=d_H(d_c,c_d)=l$.
}

We will show in Section \ref{sec:obstructions} that any Helly graph $G$ with $hb(G)=\delta$
has an isometric $H_1^k$, $H_2^k$, or $H_3^k$, where $k$ is a function of $\delta$. These isometric subgraphs will
be equally important as forbidden subgraphs for $hb(G) \leq \delta$ in Section \ref{sec:obstructions}.
We provide the hellification of all three graphs here for completeness, however, the remainder of this section will
use only the graph in Fig. \ref{fig:unhellified-frames}(c) and its hellification $H_3^{k,l}$ in order to
refine result of Lemma \ref{hyperbolicity-less-than-thiness} in the special case \heather{when $2hb(G)=\tau(G)+1$.}

\begin{figure}
    [hbt] 
     \vspace*{-.2cm}
    \centering
    \image{$H_1^{k,l}$}{
       \includegraphics[scale=.43]{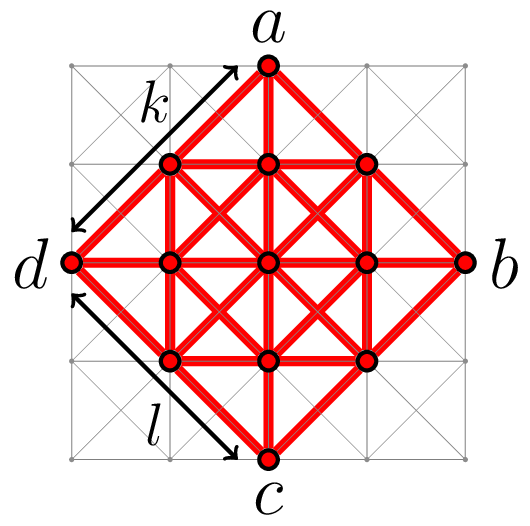}
     }\quad
     \image{$H_2^{k,l}$}{
       \includegraphics[scale=.43]{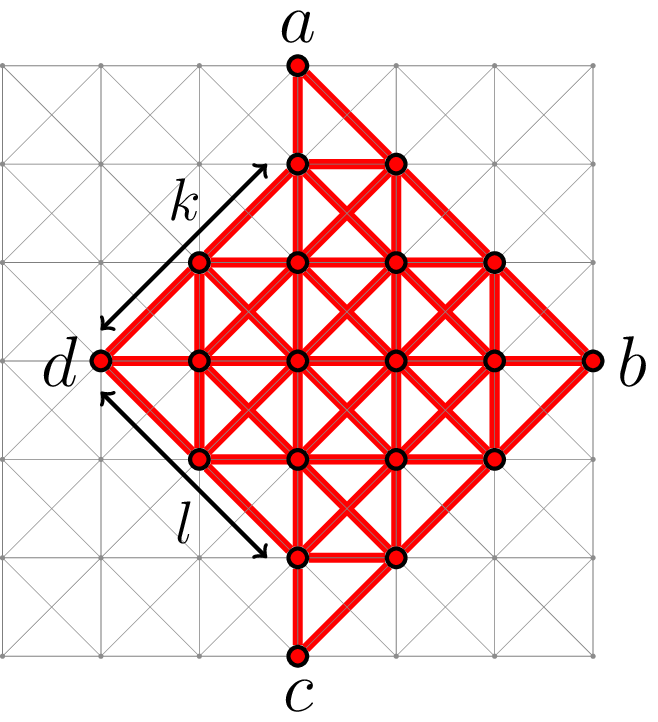}
     }\quad
     \image{$H_3^{k,l}$}{
       \includegraphics[scale=.43]{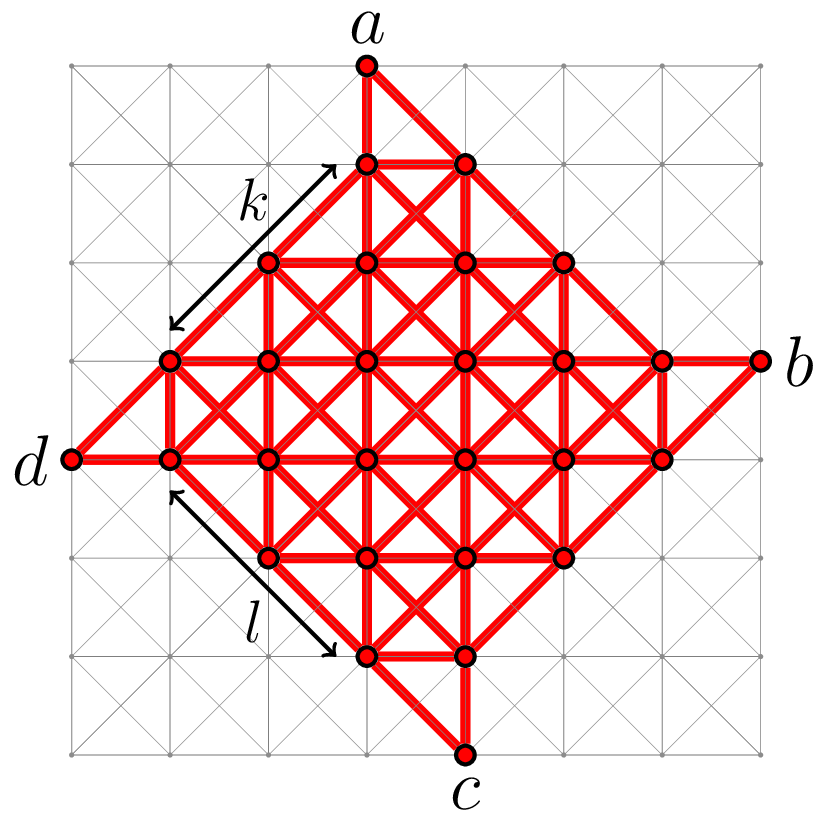}
     }
    \caption
    {
        Examples of $H_1^{k}$, $H_2^{k}$, and $H_3^{k}$ \heather{shown in red,} where $k=l=2$, based
        on respective inputs from Fig. \ref{fig:unhellified-frames}. We omit the second superscript
        and use the notation $H_i^k$ when $k=l$.
        Isometric embeddings of those graphs into the King-grid are shown.
    }
    \label{fig:frames-hellified}
\end{figure}
\setcounter{subfloat}{0}
%


 \begin{restatable}{lemma}{sframehellified}\label{s-frame-hellified}
   If a Helly graph $G$ has an $\{a,b,c,d\}$-distance preserving subgraph
   depicted on Fig. \ref{fig:unhellified-frames}(a), then $G$ has an isometric subgraph $H_1^{k,l}$
   with $a,b,c,d$ as corner points (see Fig. \ref{fig:frames-hellified}(a)).
 \end{restatable}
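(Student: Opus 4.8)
The plan is to realise $H_1^{k,l}$ inside $G$ vertex by vertex, using the Helly property as essentially the only tool, and then to check that the resulting induced subgraph is isometric. Write $H:=H_1^{k,l}$. First I would record the features of $H$ that are visible from its isometric embedding into a King-grid: $a,c$ and $b,d$ are pairs of opposite corners with $d_H(a,c)=d_H(b,d)=k+l$; every vertex $v$ of $H$ lies simultaneously on an $(a,c)$-geodesic and on a $(b,d)$-geodesic, so $d_H(v,a)+d_H(v,c)=k+l=d_H(v,b)+d_H(v,d)$; the pair of ``coordinates'' $(d_H(v,a),d_H(v,b))$ determines $v$; and $d_H(u,v)=\max\{\,|d_H(u,a)-d_H(v,a)|,\ |d_H(u,b)-d_H(v,b)|\,\}$. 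From the given $\{a,b,c,d\}$-distance-preserving subgraph of Fig.~\ref{fig:unhellified-frames}(a), which is an isometric even cycle through $a,b,c,d$, I would also record that $d_G(a,b)=d_G(c,d)=l$, $d_G(b,c)=d_G(a,d)=k$ and $d_G(a,c)=d_G(b,d)=k+l$, i.e.\ the corner-to-corner distances of $G$ agree with those of $H$.

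Next I would process the vertices of $H$ in lexicographic order of $(d_H(\cdot,a),d_H(\cdot,b))$, starting from $a$ itself (put $a^{*}:=a$). Suppose $v\neq a$ is to be placed and that all earlier vertices, in particular the nonempty set $N(v)$ of $H$-neighbours of $v$ that precede $v$, have already been realised, each $u\in N(v)$ by a vertex $u^{*}\in V(G)$. Apply the Helly property to the family consisting of the disks $D_G(u^{*},1)$ for $u\in N(v)$ together with $D_G(a,d_H(v,a))$, $D_G(b,d_H(v,b))$, $D_G(c,d_H(v,c))$, $D_G(d,d_H(v,d))$. These disks pairwise intersect: two corner disks intersect because $d_G(s,t)=d_H(s,t)\le d_H(v,s)+d_H(v,t)$; $D_G(u^{*},1)$ meets a corner disk of $s$ because, by the inductive hypothesis below, $d_G(u^{*},s)=d_H(u,s)$ and $|d_H(u,s)-d_H(v,s)|\le 1$ (as $d_H(u,v)=1$); and $D_G(u_1^{*},1)$ meets $D_G(u_2^{*},1)$ because $d_G(u_1^{*},u_2^{*})=d_H(u_1,u_2)\le 2$. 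Let $v^{*}$ be a common vertex. From $d_G(v^{*},a)\le d_H(v,a)$, $d_G(v^{*},c)\le d_H(v,c)$ and $d_H(v,a)+d_H(v,c)=k+l=d_G(a,c)\le d_G(v^{*},a)+d_G(v^{*},c)$, both inequalities are equalities; the same argument with $b,d$ fixes $d_G(v^{*},b)$ and $d_G(v^{*},d)$. Hence $v^{*}$ has the prescribed distances to all four corners, it differs from every previously placed vertex (distinct vertices of $H$ have distinct coordinate pairs, realised by the placed vertices as distances to $a$ and $b$), and it is adjacent to each $u^{*}$, $u\in N(v)$ (distance $\le 1$ from the disk, and $\ge 1$ because the coordinates differ).

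It remains to verify that $w\mapsto w^{*}$ is an isometry onto its image; this is precisely the statement I would prove by induction along the processing order, and that was invoked above. The lower bound $d_G(v^{*},u^{*})\ge\max\{\,|d_H(v,a)-d_H(u,a)|,\ |d_H(v,b)-d_H(u,b)|\,\}=d_H(v,u)$ is immediate from the triangle inequality through $a$ and through $b$. For the upper bound, when $v$ is placed take, for any earlier $u$, a $(u,v)$-geodesic in $H$ that is monotone in the $a$-coordinate (or, if $u$ and $v$ have the same $a$-coordinate, in the $b$-coordinate); its penultimate vertex $v_0$ is then an $H$-neighbour of $v$ that precedes $v$, so $v_0\in N(v)$, and $d_G(v^{*},u^{*})\le 1+d_G(v_0^{*},u^{*})=1+d_H(v_0,u)=d_H(v,u)$. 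Consequently the subgraph of $G$ induced by $\{w^{*}:w\in V(H)\}$ is isometric in $G$ and isomorphic to $H=H_1^{k,l}$ with $a,b,c,d$ as corner points; in particular $v^{*}\sim u^{*}$ in $G$ iff $v\sim u$ in $H$, since nonadjacent vertices of $H$ are at $H$-distance $\ge 2$ and hence at $G$-distance $\ge 2$.

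The step I expect to be the main obstacle is the combinatorial bookkeeping that makes this induction legitimate: one must check that $H_1^{k,l}$ admits an ordering of its vertices such that, at the moment $v$ is placed, (i) $N(v)\neq\emptyset$ and a coordinate-monotone geodesic from every earlier vertex to $v$ enters $v$ through a member of $N(v)$, and (ii) the already-placed vertices induce an isometric (indeed geodesically convex) subgraph of $H$; both rely on the King-grid geometry of $H_1^{k,l}$, namely that ``half-space-like'' and interval-slice subsets are geodesically convex and that every pair of vertices is joined by a coordinate-monotone geodesic staying inside $H$. One must also re-verify, at each step, that the Helly hypotheses still hold --- this is where the assumption that $G$ is Helly, rather than an arbitrary supergraph of the cycle of Fig.~\ref{fig:unhellified-frames}(a), is used in an essential way, as already $C_4$ versus $H_1^{1,1}$ shows that the completion can otherwise fail entirely.
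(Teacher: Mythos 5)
Your overall strategy --- build the image of $H_1^{k,l}$ in $G$ one vertex at a time by a Helly application to the unit disks around already-placed neighbours together with the four corner disks, then prove isometry by induction along the placement order --- is genuinely different from the paper's proof. The paper instead runs an induction on $k+l$: three Helly applications produce $a'$, $c'$, $d'$ so that $a',b,c',d'$ satisfy the hypotheses for $H_1^{k-1,l-1}$, and the resulting isometric $H_1^{k-1,l-1}$ is then extended back to $H_1^{k,l}$ by a linear chain of local Helly applications along the two boundary paths $P(d,a)$ and $P(d,c)$. The advantage of the paper's scheme is that every Helly application and every distance verification is local to one layer and anchored to the four corners, so no global ordering of $V(H_1^{k,l})$ and no global distance formula for $H_1^{k,l}$ are ever needed. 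Your corner-disk trick for pinning down $d_G(v^*,s)$ for all $s\in\{a,b,c,d\}$ is correct and clean, and the pairwise-intersection checks you list are all valid given your inductive hypotheses.

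The genuine gap is that the two pillars your induction rests on are asserted rather than proved, and neither is routine. First, the structural facts about $H:=H_1^{k,l}$: that every vertex lies on an $(a,c)$-geodesic and on a $(b,d)$-geodesic, that $(d_H(\cdot,a),d_H(\cdot,b))$ is injective, and above all the formula $d_H(u,v)=\max\{|d_H(u,a)-d_H(v,a)|,|d_H(u,b)-d_H(v,b)|\}$, which you need both for the lower bound on $d_G(u^*,v^*)$ and for distinguishing placed vertices. This formula is \emph{false} for the ambient King-grid with respect to two of its corners (two distinct vertices of a King-grid can have equal distances to two fixed corners), so it genuinely depends on the particular shape of $H_1^{k,l}$ --- which, note, is not the full ``diamond'' of the King-grid but omits certain interior vertices (already $H_1^{1,1}=C_4$, not $K_4$) --- and the paper specifies that shape only by a picture. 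Second, the claim you yourself flag as the main obstacle: that for every $v$ and every lexicographically earlier $u$ there is a $(u,v)$-geodesic of $H$ whose penultimate vertex precedes $v$. Because $H$ is a frame rather than a convex block of the King-grid, geodesics cannot be rerouted freely (e.g.\ there is no geodesic between $b$ and $d$ that is monotone in the $a$-coordinate in your sense), so this is exactly the kind of statement that must be checked against an explicit combinatorial description of $H_1^{k,l}$; without it the inductive upper bound $d_G(v^*,u^*)\le d_H(v,u)$, and hence the pairwise-intersection check $d_G(u_1^*,u_2^*)\le 2$ at later steps, is unsupported. Until these two points are established the argument is a plan rather than a proof; the paper's layer-peeling induction is precisely a way of avoiding having to establish them.
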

 \begin{proof} Let $a,b,c,d$ be four vertices of a Helly graph $G$  such that
 $d_G(a,c)=$ $d_G(b,d)=k+l$ and $d_G(a,b)=$ $d_G(c,d)=$ $l$ and $d_G(b,c)=$ $d_G(d,a)=$ $k$. Let $P(d,a)=(d,v_1,v_2,\dots, v_k=a)$,
 $P(d,c)=(d,u_1,u_2,\dots, u_l=c)$ be two shortest paths connecting the appropriate vertices. Consider disks $D(v_1,1)$, $D(u_1,1)$, $D(b,k+l-2)$ in $G$.
These disks pairwise intersect. Hence, by the Helly property, there is a vertex $d'$ which is adjacent to both $v_1$ and $u_1$ and at distance \guarnera{$k+l-2$} from $b$. Since disks $D(a,1), D(b,l-1), D(d',k-1)$ pairwise intersect, there must exist a vertex $a'$ such that $a'$ is adjacent to $a$ and at distance $k-1$ from $d'$ and distance $l-1$ from $b$. Similarly, considering pairwise intersecting disks  $D(c,1), D(b,k-1), D(d',l-1)$, there exists a vertex  $c'$  which is
adjacent to $c$ and at distance $l-1$ from $d'$ and distance $k-1$ from $b$.
 For vertices $a',b,c',d'$ we have $d_G(a',c')=d_G(b,d')=l+k-2$ and $d_G(a',b)=d_G(c',d')=l-1$ and $d_G(b,c')=d_G(d',a')=k-1$. Hence, by induction, we may assume that in $G$ there is an isometric subgraph $H_1^{k-1,l-1}$ with $a',b,c',d'$ as corner points. In what follows, using the Helly property, we extend this $H_1^{k-1,l-1}$ to isometric $H_1^{k,l}$ with $a,b,c,d$ as corner points (see Fig. \ref{fig:proof-sframes} for an illustration).

 Let $P(d',a')=(d'=v'_1,v'_2,\dots, v'_k=a')$ be the shortest path of $H_1^{k-1,l-1}$ connecting $d'$ with $a'$. For each edge $v'_iv'_{i+1}$ of this path, denote by $w_i$ a vertex of $H_1^{k-1,l-1}$ which forms a triangle with $v'_iv'_{i+1}$.
 \guarnera{
Let~$P(d,a)$ denote path $(d=v_0,v_1,v_2,\dots, v_k=a)$.
First, we show that each vertex $v_i \in P(d,a)$ for $i=1,2,...,k-1$ can be chosen such that~$v_iv'_i$ is an edge of~$G$ for each~$i$.
}
Let \guarnera{$i\ge 1$} be the smallest index such that $v_iv'_i\notin E$. Consider pairwise intersecting disks $D(v_{i-1},1), D(v'_{i},1), D(a,d_G(a,v_i))$. By the Helly property, there is a vertex $v_i^*$ in $G$ which is adjacent to both $v_{i-1}$ and $v'_{i}$ and at distance $d_G(a,v_i)$ from $a$. Hence, we can replace part of $P(d,a)$ from $v_i$ to $a$ with a new
shortest path from $v_i^*$ to $a$. So, we can assume that  $v_iv'_i\in E$ for each $i$. Since vertices $a,v'_k,w_{k-1},v'_{k-1},v_{k-1}$ are pairwise at distance at most 2, by the Helly property, there must exist a vertex $w'_{k-1}$ which is adjacent to all $a,v'_k,w_{k-1},v'_{k-1},v_{k-1}$. Having vertex $w'_{k-1}$, we can use the Helly property to impose a new vertex $w'_{k-2}$ adjacent to all $v_{k-1},v'_{k-1},w'_{k-1},w_{k-2},v'_{k-2},v_{k-2}$. Continuing this way, we obtain a new vertex
$w'_{i}$ (for $i=k-3, k-4,\dots,1$) which is adjacent to all $v_{i+1},v'_{i+1},w'_{i+1},w_{i},v'_{i},v_{i}$.
 This completes the addition to $H_1^{k-1,l-1}$ along the path $P(d,a)=(d,v_1,v_2,\dots, v_k=a)$. Similarly,
 the addition along the path $P(d,c)=(d,u_1,u_2,\dots, u_l=c)$ can be done completing the extension of $H_1^{k-1,l-1}$ to $H_1^{k,l}$ which is clearly an isometric subgraph of $G$.  \qed

\begin{figure}
    [hbt] 
     \vspace*{-.2cm}
    \centering
    \image{}{
  \includegraphics[scale=.69]{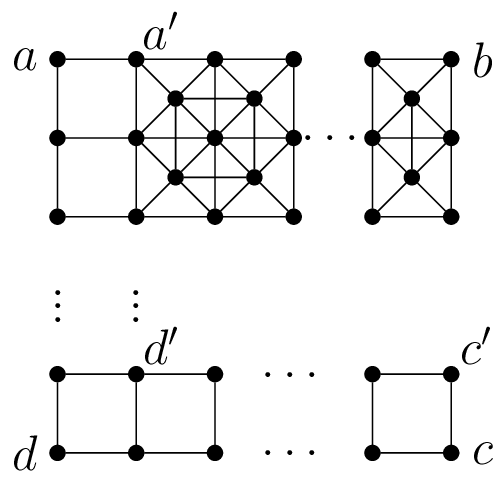}
 }\quad
 \image{}{
  \includegraphics[scale=.69]{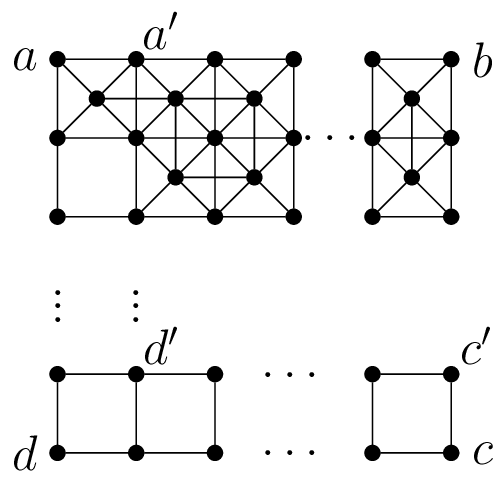}
 }\quad
 \image{}{
  \includegraphics[scale=.69]{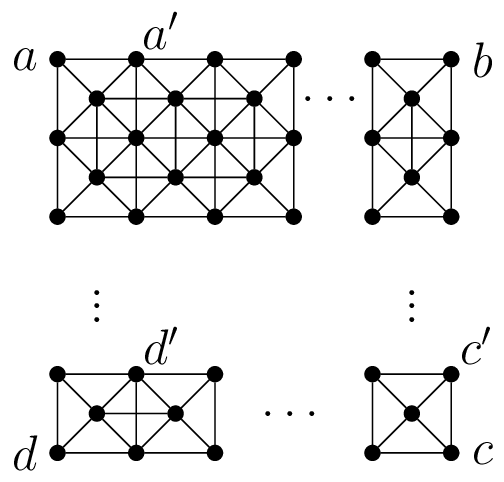}
 }
    \caption
    {
        Extension of an isometric subgraph $H_1^{k-1,l-1}$ to an isometric subgraph $H_1^{k,l}$.
        Isometric embeddings of those graphs into the King-grid are shown.
    }
    \label{fig:proof-sframes} %
\end{figure}
%
%
 \end{proof}

 \begin{restatable}{lemma}{dframehellified}\label{d-frame-hellified}
   If a Helly graph $G$ has an $\{a,b,c,d\}$-distance preserving subgraph
   depicted on Fig. \ref{fig:unhellified-frames}(b), then $G$ has an isometric subgraph $H_2^{k,l}$
   with $a,b,c,d$ as corner points (see Fig. \ref{fig:frames-hellified}(b)).
 \end{restatable}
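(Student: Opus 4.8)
The plan is to parallel the proof of Lemma~\ref{s-frame-hellified}: the isometric copy of $H_2^{k,l}$ will be built layer by layer, each layer imposed by repeated use of the Helly property, with the induction carried on $k+l$. The only genuinely new feature, relative to Lemma~\ref{s-frame-hellified}, is the triangle that the d-frame of Fig.~\ref{fig:unhellified-frames}(b) carries (responsible for the extra ``$+1$'' in two of its long distances compared with the s-frame of Fig.~\ref{fig:unhellified-frames}(a)); this triangle has to be threaded through each step of the construction.

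First I would fix notation from Fig.~\ref{fig:unhellified-frames}(b): let $a,b,c,d$ be the four extreme vertices, record the prescribed pairwise distances, and fix shortest paths realizing them. For the base of the induction ($k$ or $l$ at most $1$, say) the graph $H_2^{k,l}$ has bounded size, and the required isometric subgraph is produced directly from the Helly property by filling in the single triangle together with at most one grid cell. For the inductive step I would peel the d-frame from the ``clean'' extreme corner $d$ (the one playing the role of $d$ in Lemma~\ref{s-frame-hellified}), using two shortest paths $P(d,a)=(d,v_1,\dots)$ and $P(d,c)=(d,u_1,\dots)$. As in Lemma~\ref{s-frame-hellified}, the pairwise-intersecting disks $D(v_1,1)$, $D(u_1,1)$, $D(b,\cdot)$ yield a vertex $d'$ adjacent to both $v_1$ and $u_1$ and at the correct distance from $b$, and pairwise-intersecting disks around $a$ and around $c$ then produce vertices $a'$ and $c'$.

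The heart of the argument is to check that $\{a',b,c',d'\}$ spans an $\{a',b,c',d'\}$-distance-preserving subgraph of the \emph{same} type, i.e., a d-frame with parameters $k-1,l-1$, so that the triangle ``survives'' one round of peeling; granting this, the induction hypothesis supplies an isometric $H_2^{k-1,l-1}$ on corners $a',b,c',d'$. I would then re-attach the outer layer exactly as in Lemma~\ref{s-frame-hellified}: first use radius-$1$ Helly impositions to align each $v_i$ (and each $u_j$) with the path vertex $v'_i$ (resp.\ $u'_j$) of the inner copy, rerouting tails of shortest paths where necessary, and then use the Helly property on the already pairwise-close tuples $\{v_{i+1},v'_{i+1},w'_{i+1},w_i,v'_i,v_i\}$ to introduce the missing King-grid diagonal vertices $w'_i$; symmetrically along the $P(d,c)$-side. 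Near the triangle the relevant tuples additionally contain the triangle vertices, but since all of these are pairwise at distance at most $1$, the same Helly step goes through verbatim.

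The main obstacle is the bookkeeping around the triangle, in two parts: (i) verifying that after one peeling round the reduced configuration $a',b,c',d'$ is a genuine d-frame (and not degenerate, nor an s-frame) --- a routine but careful distance check that fixes which small values of $k,l$ must be handled as base cases; and (ii) verifying that the final subgraph is isometric in $G$, i.e., that no shortcut is created through the newly added diagonal and triangle vertices. For (ii) I would argue, as is implicit in Lemma~\ref{s-frame-hellified}, that the constructed subgraph admits King-grid coordinates, so that its internal distances equal the $\ell_\infty$-distances of those coordinates; since $G$ contains the given $\{a,b,c,d\}$-distance-preserving subgraph together with all shortest paths used in the construction, every $G$-distance between two of its vertices is at least their King-grid distance, hence equal to it. As a fallback I would mention the alternative of first contracting the triangle of the d-frame to a single vertex via the Helly property, obtaining an $\{a,b,c,d\}$-distance-preserving s-frame of smaller parameters, invoking Lemma~\ref{s-frame-hellified} to get an isometric $H_1$, and then reattaching the triangle cap with one further round of Helly impositions.
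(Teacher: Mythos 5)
Your fallback is, in essence, the paper's actual proof, and it is the cleaner route: the paper sets up no new induction on the d-frame at all. It takes the two triangle-neighbours $a_b$ and $c_b$ of $a$ and $c$ pointing towards $b$, applies the Helly property to the pairwise intersecting disks $D(a_b,k)$, $D(c_b,l)$, $D(d,1)$ to obtain a vertex $d'$ adjacent to $d$ with $d(d',a_b)=k$, $d(d',c_b)=l$, $d(d',b)=k+l$, observes that $\{a_b,b,c_b,d'\}$ is then a \emph{full-size} s-frame (not one of smaller parameters), invokes Lemma~\ref{s-frame-hellified} to get an isometric $H_1^{k,l}$, and finally attaches the single missing outer layer running from $a$ through a neighbour of $d$ to $c$ by exactly the chain of Helly impositions you describe. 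Note also that the d-frame carries \emph{two} triangles, one at $a$ and one at $c$ (forced by Lemma~\ref{fnew}, since $(b|d)_a=(b|d)_c=\tfrac{1}{2}$ while $(a|c)_b=(a|c)_d=0$); your consistent use of the singular suggests you may have mis-pictured the configuration, and ``contracting the triangle to a single vertex'' is not what happens --- the triangles are bypassed by choosing one vertex of each as a corner of the inner s-frame, and new copies of them are created when the outer layer is imposed.

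Your primary plan --- a fresh induction peeling the d-frame layer by layer --- has two concrete problems. First, the base case is not of bounded size: decrementing $k$ and $l$ together bottoms out at $H_2^{0,l}$ (resp.\ $H_2^{k,0}$), which still has $\Theta(l)$ vertices and needs its own construction. Second, and more substantively, after one peeling round you must exhibit $\{a',b,c',d'\}$ as a genuine d-frame, which requires producing the triangles at $a'$ and at $c'$; this is not a ``distance check'' but an application of pseudo-modularity (Lemma~\ref{fnew}, using that $(b|d')_{a'}$ is again a half-integer), and moreover several of the distances you rely on (for instance that $d(a',d')$ equals $k$ rather than $k-1$) are not pinned down by the triangle inequality alone and require choosing $d'$ and $a'$ relative to the right neighbours of $d$ and $a$. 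None of this is fatal, but it is precisely the bookkeeping the paper avoids by reducing to Lemma~\ref{s-frame-hellified} in a single step. Your $\ell_\infty$-coordinate argument for isometry of the final subgraph is a legitimate alternative to the paper's, which instead checks that every pair of newly added vertices lies on a common shortest path between two of $a,b,c,d$.
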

 \begin{proof}
   Let $a$, $b$, $c$, $d$ be vertices of a Helly graph $G$, with $\bigtriangleup(a,a_b,a_d)$
   and $\bigtriangleup(c,c_b,c_d)$ such that
   $d_G(a,c)=k+l+$ $2 = d_G(a,a_b)+d_G(a_b,c_b)+d_G(c_b,c) = d_G(a,a_d)+d_G(a_d,c_d)+d_G(c_d,c)$,
   $d_G(b,d)=k+l+1 = d_G(d,a_d)+1+d_G(a_b,b) = d_G(d,c_d)+1+d_G(c_b,b)$, and
   $d_G(b,c)=d_G(d,a)=k+1$ and $d_G(a,b)=d_G(c,d)=l+1$.
   Consider disks $D(a_b,k)$, $D(c_b,l)$, and $D(d,1)$ in $G$. These disks pairwise intersect.
   Hence, by the Helly property, there is a vertex $d'$ which is adjacent to $d$
   and at distance $k$ from $a_b$ and at distance $l$ from $c_b$. For vertices $a_b,b,c_b,d'$,
   we have $d_G(a_b,b)=d_G(c_b,d')=l$, $d_G(b,c_b)=d_G(d',a_b)=k$, and
   $d_G(a_b,c_b)=d_G(d',b)=k+l$. By Lemma \ref{s-frame-hellified}, there is
   an isometric subgraph $H_1^{k,l}$ with $a_b,b,c_b,d'$ as corner points (see
   Fig. \ref{fig:proof-dframes}).

   Let $P(a_b,d') = (a_b = v'_0, v'_1, v'_2, ..., v'_k = d')$ be the shortest path of
   $H_1^{k,l}$ connecting $d'$ with $a_b$. For each edge $v'_{i}v'_{i+1}$ of this
   path, denote by $w'_{i+1}$ a vertex of $H_1^{k,l}$ which forms a triangle with $v'_{i}v'_{i+1}$.
   Since vertices $d,v'_k,v'_{k-1},w'_k$ are pairwise distant at most 2 and
   distant from $a$ at most $k+1$, by the Helly property there must exist a
   vertex $v^*_{k}$ adjacent to $d,v'_k,v'_{k-1},w'_k$ and at distance $k$ from $a$.
   Having vertex $v^*_{k}$, we can use the Helly property to impose a new vertex
   $v^*_{k-1}$ which is adjacent to all $v^*_{k},v'_{k-1},v'_{k-2},w'_{k-1}$
   and at distance $k-1$ from $a$. Continuing this way, we obtain a new vertex
   $v^*_i$ which is adjacent to all $v^*_{i+1},v'_{i},v'_{i-1},w'_{i}$ and at
   distance $i$ from $a$ (for $i=k-2,k-3,...,1$). This completes the addition
   to $H_1^{k,l}$ along the path $P(a_b,d')$.
   Similarly, the addition along the path $P(c_b,d') = (c_b = u'_0, u'_1, u'_2, ..., u'_l = d')$
   can be done.
   This completes the extension of $H_1^{k,l}$ to $H_2^{k,l}$.

   Clearly, $H_2^{k,l}$ obtained from $H_1^{k,l}$ is an isometric subgraph of $G$. Recall that $H_2^{k,l}$ is a $\{a,b,c,d\}$-distance preserving subgraph of $G$. We know from Lemma \ref{s-frame-hellified} that
   $H_1^{k,l}$-part of $H_2^{k,l}$ is an isometric subgraph of $G$. We know also that every pair $x,y \in H_2^{k,l} \setminus H_1^{k,l}$
   belongs to a shortest path of $G$ from $a$ to $c$ passing through $d$. Finally, every pair $x,y$ with  $x \in H_2^{k,l} \setminus H_1^{k,l}$ and  $y \in H_1^{k,l}$ belongs to a shortest path of $G$ connecting $a$ with $c$ or $b$ with $d$.
   \qed
    \begin{figure}[!hbt]
    \begin{centering}
      \includegraphics[scale=.6]{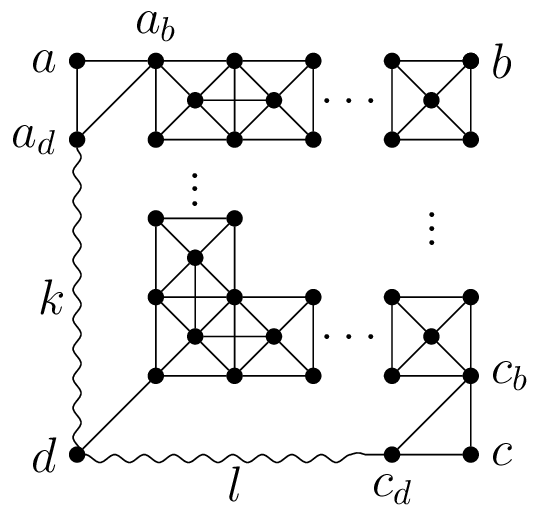}
      \caption{Using the Helly property, the graph from Fig. \ref{fig:unhellified-frames}(b) is shown
   to have $H_2^{k,l}$ as an isometric subgraph.}\label{fig:proof-dframes}
    \end{centering}
    \end{figure}
  \end{proof}

  \begin{restatable}{lemma}{fframehellified}\label{f-frame-hellified}
   If a Helly graph $G$ has an $\{a,b,c,d\}$-distance preserving subgraph
   depicted on Fig. \ref{fig:unhellified-frames}(c), then $G$ has an isometric subgraph $H_3^{k,l}$
   with $a,b,c,d$ as corner points (see Fig. \ref{fig:frames-hellified}(c)).
 \end{restatable}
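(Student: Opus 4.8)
The plan is to follow the template of the proofs of Lemmas~\ref{s-frame-hellified} and~\ref{d-frame-hellified}: reduce the $\{a,b,c,d\}$-distance-preserving subgraph of Fig.~\ref{fig:unhellified-frames}(c), which carries a triangular corner at each of $a$, $b$, $c$, $d$, to the configuration of Fig.~\ref{fig:unhellified-frames}(b), which carries triangular corners at only two opposite vertices; then apply Lemma~\ref{d-frame-hellified} to get an isometric copy of $H_2$; then use the Helly property repeatedly to ``thicken'' this copy into the required $H_3^{k,l}$; and finally check that the resulting subgraph is isometric. Alternatively, one could reduce all the way to Fig.~\ref{fig:unhellified-frames}(a), obtain an isometric $H_1$ from Lemma~\ref{s-frame-hellified}, and then rebuild all four triangular corners; the bookkeeping is comparable.

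In detail, I would first fix the four triangles $\bigtriangleup(a,a_b,a_d)$, $\bigtriangleup(b,b_a,b_c)$, $\bigtriangleup(c,c_b,c_d)$, $\bigtriangleup(d,d_a,d_c)$ together with the distances of Fig.~\ref{fig:unhellified-frames}(c), namely $d_G(a,c)=d_G(b,d)=k+l+3$, $d_G(a,b)=d_G(c,d)=l+2$, $d_G(b,c)=d_G(d,a)=k+2$, with the obvious splittings of these distances through the four triangles. The crucial first step is to eliminate two opposite triangular corners, say those at $a$ and at $c$: applying the Helly property to suitable families of three pairwise-intersecting disks --- small disks around the two vertices $a_b$ and $a_d$ adjacent to $a$, together with a disk around $c$ of the right radius --- yields a vertex $a'$ adjacent to both $a_b$ and $a_d$ and at the prescribed distances from $b$ and $d$; symmetrically one obtains $c'$ near $c$. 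Then $\{a',b,c',d\}$, with the triangular corners at $b$ and $d$ retained, should realize exactly the distance pattern of Fig.~\ref{fig:unhellified-frames}(b) (after the relabelling that puts $b$ and $d$ into the roles of its two triangular corners), and Lemma~\ref{d-frame-hellified} then produces an isometric $H_2$ with $a'$, $b$, $c'$, $d$ as corner points.

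It remains to extend this $H_2$ to an isometric $H_3^{k,l}$ by restoring the triangular corners at $a$ and at $c$. Along the shortest path of the $H_2$-part running from $a'$ toward $b$, for each edge $v'_iv'_{i+1}$ and its triangle-completing vertex $w'_i$ I would use the Helly property to insert a new vertex $v^*_i$ adjacent to $v'_i$, $v'_{i+1}$, $w'_i$ and to the previously inserted $v^*_{i+1}$, and lying at the distance from $a$ prescribed by $H_3^{k,l}$ --- exactly the extension step already used in Lemmas~\ref{s-frame-hellified} and~\ref{d-frame-hellified}. Doing the same along the path from $a'$ toward $d$ finishes the corner at $a$, and the symmetric constructions around $c'$ finish the corner at $c$; the union of the $H_2$-part with all added vertices is the desired $H_3^{k,l}$. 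For isometry, the $H_2$-part is isometric by Lemma~\ref{d-frame-hellified}, every added vertex lies by construction on a shortest path of $G$ between two of $a,b,c,d$, and $H_3^{k,l}$ is an $\{a,b,c,d\}$-distance-preserving subgraph, so all distances inside $H_3^{k,l}$ agree with those of $G$.

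The main obstacle will be the distance bookkeeping in the reduction step. Because a frame is ``hollow'', every shortest path of $H_3^{k,l}$ between two opposite corners must go around, cutting diagonally through the triangular corner at one of the other two corners; hence $d_G(a',c')$ is controlled by the structure near $b$ and near $d$, and getting the several $\pm 1$ terms right so as to land precisely in the pattern of Fig.~\ref{fig:unhellified-frames}(b) needs a case analysis according to which corner a shortest path uses. A secondary nuisance is the degenerate cases $k=0$ or $l=0$, in which a side of the frame collapses and $H_3^{k,l}$ degenerates (toward an $H_2$ or a path) and which must be treated separately as base cases, together with the routine but index-heavy verification that each family of disks invoked in the extension step is indeed pairwise intersecting.
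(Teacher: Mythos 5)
Your overall architecture coincides with the paper's: reduce the configuration of Fig.~\ref{fig:unhellified-frames}(c) to one matching Fig.~\ref{fig:unhellified-frames}(b), invoke Lemma~\ref{d-frame-hellified} to obtain an isometric $H_2$, grow it layer by layer with the Helly property along the sides that are missing, and argue isometry by noting that every added vertex lies on a shortest path between two of the four corner points. The extension and isometry portions of your sketch are essentially what the paper does.

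The gap is in the reduction step, and it is not deferrable bookkeeping: the configuration you name does not satisfy the distance pattern of Fig.~\ref{fig:unhellified-frames}(b). If $b$ and $d$ are to be the two triangle corners of such a configuration with plain corners $a'$ and $c'$, then $a'$ must lie on a shortest $(b,d)$-path, i.e.\ $d(a',b)+d(a',d)=d(b,d)=k+l+3$. Hence no vertex is simultaneously at distance $l+1$ from $b$ and $k+1$ from $d$ (that sum is only $k+l+2$), so the symmetric trimming of the corner at $a$ is impossible. A vertex adjacent to both $a_b$ and $a_d$ has $d(a',b)\in\{l+1,l+2\}$ and $d(a',d)\in\{k+1,k+2\}$, and the only admissible profiles, $(l+1,k+2)$ and $(l+2,k+1)$, are exactly those already realized by $a_b$ and $a_d$; no Helly construction is needed there, and either choice forces the Fig.~\ref{fig:unhellified-frames}(b) parameters to be $(k+1,l)$ or $(k,l+1)$ rather than $(k,l)$, which changes what the extension phase must add. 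Moreover, your disk family (unit disks around $a_b$, $a_d$ plus a disk around $c$) does not control $d(a',b)$ or $d(a',d)$ at all. The paper sidesteps all of this by trimming only the corner at $d$: applying Lemma~\ref{centroids} to $a_b$, $c_b$, $d$ (with $d(a_b,c_b)=k+l+1$, $d(a_b,d)=k+2$, $d(c_b,d)=l+2$) yields a triangle $\bigtriangleup(d',d'_a,d'_c)$ with $d'$ adjacent to $d$, so that $\{a_b,b,c_b,d'\}$ realizes Fig.~\ref{fig:unhellified-frames}(b) exactly with parameters $k,l$ and triangle corners $b$ and $d'$; the extension then only has to rebuild the outer layer around $d$ and reattach $a$ and $c$. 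Your plan can likely be repaired along these lines, but as written the central claim that $\{a',b,c',d\}$ realizes the pattern of Fig.~\ref{fig:unhellified-frames}(b) is false for the vertices you construct.
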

 \begin{proof}
   Let $a,b,c,d$ be vertices of a Helly graph $G$ such that $d_G(a,c)=d_G(b,d)=l+k+3$
   and $d_G(a,b)=d_G(c,d)=l+2$ and $d_G(b,c)=d_G(d,a)=k+2$ (see Fig. \ref{fig:proof-fframes}(a)).
   Since $d(a_b,c_b)=k+l+1$, $d(a_b,d)=k+1+1$ and $d(c_b,d)=1+l+1$, by Lemma \ref{centroids},
   there is a triangle $\bigtriangleup(d',d'_a,d'_c)$ such that
   $d$ is adjacent to $d'$ and $d_G(d'_c,c_b) = l $, $d_G(d'_a,a_b) = k$.
 For vertices $a_b, b,c_b,d'$, we have $d_G(a_b,b) = d_G(c_b,d') = l+1$, $d_G(b,c_b) =  d_G(d',a_b) = k+1$,
   and $d_G(d',b) = k+l + 2$, as well as $d_G(a_b,c_b) = k+l+1$. By Lemma
   \ref{d-frame-hellified}, there is in $G$ an isometric $H_2^{k,l}$ with $a_b, b,c_b,d'$
   as corner points (see Fig. \ref{fig:proof-fframes}(b)).

   Let $P(a_b,d') = (a_b = v'_0, v'_1, v'_2, ..., v'_k, d')$ be the shortest
   path of $H_2^{k,l}$ connecting $d'$ with $a_b$, and
   let $P(c_b,d') = (c_b = u'_0, u'_1, u'_2, ..., u'_l, d')$ be the shortest
   path of $H_2^{k,l}$ connecting $d'$ with $c_b$. For each edge $v'_iv'_{i+1}$,
   denote by $w_{i+1}$ a vertex of $H_2^{k,l}$ which forms a triangle with $v'_iv'_{i+1}$.
   Since vertices $d',d,v'_k,u'_l$ are pairwise at distance at most 2 and at distance at most $k+2$ from
   $a$, by the Helly property, there must exist a vertex $v^*_{k+1}$
   adjacent to $d',d,v'_k,u'_l$ and at distance $k+1$ from $a$.
   Having vertex $v^*_{k+1}$, we can use the Helly property to impose a new vertex
   $v^*_{k}$ which is adjacent to all $v^*_{k+1},w_{k},v'_{k},v'_{k-1}$ and at distance
   $k$ from $a$. Continuing this way, we obtain a new vertex $v^*_i$ which
   is adjacent to all $v^*_{i+1},w_{i},v'_{i},v'_{i-1}$ and at distance $i$ from $a$
   (for $i=k-1, k-2, ..., 1$).

   For each edge $u'_iu'_{i+1}$
   denote by $y_{i+1}$ a vertex of $H_2^{k,l}$ which forms a triangle with $u'_iu'_{i+1}$.
   Since vertices $d,v^*_{k+1},u'_l,v'_k$ are pairwise at distance at most 2 and at distance  at most $l+2$ from
   $c$, by the Helly property, there must exist a vertex $u^*_{l+1}$
   adjacent to $d,v^*_{k+1},u'_l,v'_k$ and at distance $l+1$ from $c$.
   Having vertex $u^*_{l+1}$, we can use the Helly property to impose a new
   vertex $u^*_{l}$ which is adjacent to all $u^*_{l+1},u'_{l},u'_{l-1},y_{l}$
   and at distance $l$ from $c$.
   Continuing this way, we obtain a new
   vertex $u^*_i$ which is adjacent to all $u^*_{i+1},u'_{i},u'_{i-1},y_{i}$
   and is at distance $i$ from $c$ (for $i=l-1,l-2,...,1)$. This completes
   the extension of $H_2^{k,l}$ to $H_3^{k,l}$.

     Clearly, $H_3^{k,l}$ obtained from $H_2^{k,l}$ is an isometric subgraph of $G$. Recall that $H_3^{k,l}$ is a $\{a,b,c,d\}$-distance preserving subgraph of $G$. We know from Lemma \ref{d-frame-hellified} that
   $H_2^{k,l}$-part of $H_3^{k,l}$ is an isometric subgraph of $G$. We know also that every pair $x,y \in H_3^{k,l} \setminus H_2^{k,l}$
   belongs to a shortest path of $G$ from $a$ to $d$ or from $d$ to $c$ or from $a$ to $c$ passing through a neighbor of $d$. Finally, every pair $x,y$ with  $x \in H_3^{k,l} \setminus H_2^{k,l}$ and  $y \in H_2^{k,l}$ belongs to a shortest path of $G$ connecting $s$ with $t$ where $s,t\in \{a,b,c,d\}$.
   \qed

   \begin{center}
   \centering\image{}{
     \includegraphics[scale=.69]{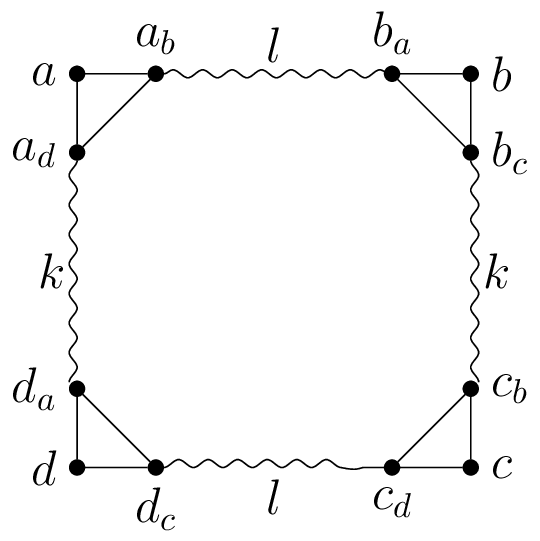}
   }\quad
   \image{}{
     \includegraphics[scale=.69]{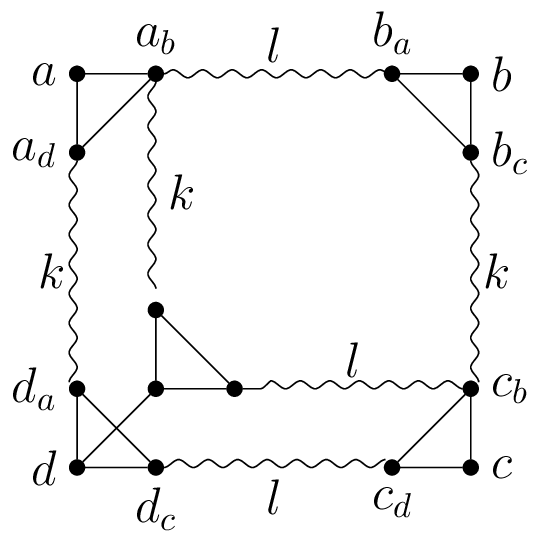}
   }
   \captionof{figure}{Using the Helly property, the graph from
   Fig. \ref{fig:unhellified-frames}(c) is shown
   to have $H_3^{k,l}$ as an isometric subgraph.}
   \label{fig:proof-fframes}
   \end{center}
   \setcounter{subfloat}{0}
 \end{proof}

Combining Lemma \ref{half-thin}, Lemma \ref{hyperbolicity-less-than-thiness}  and Lemma \ref{f-frame-hellified}, we conclude
with a tight bound on hyperbolicity with respect to interval thinness in Helly graphs, as well as with a characterization
of the case in which the hyperbolicity of a Helly graph realizes the upper bound.

 \begin{theorem} \label{thinness-hyperbolicity}
  For every Helly graph $G$, \heather{$\tau(G) \leq 2hb(G) \leq \tau(G)+1$. Furthermore, $2hb(G)=\tau(G)+1$}
   if and only if $\tau(G)$ is odd and 
 $G$ contains graph $H_3^{k}$ with $k=\lfloor\frac{\tau(G)}{2}\rfloor$ as an isometric subgraph.
 \end{theorem}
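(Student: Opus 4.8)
The plan is to assemble the three preceding lemmas. First I would record the two-sided bound: the inequality $\tau(G)\le 2hb(G)$ is exactly Lemma~\ref{half-thin}, while $2hb(G)\le\tau(G)+1$ is the first assertion of Lemma~\ref{hyperbolicity-less-than-thiness}; together they give $\tau(G)\le 2hb(G)\le\tau(G)+1$ for every Helly graph $G$.

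For the forward direction of the characterization, suppose $2hb(G)=\tau(G)+1$. Then by the ``furthermore'' part of Lemma~\ref{hyperbolicity-less-than-thiness}, $\tau(G)$ is odd and $G$ contains an $\{a,b,c,d\}$-distance-preserving subgraph of the type shown in Fig.~\ref{fig:unhellified-frames}(c) with $k=l=\lfloor\tau(G)/2\rfloor$. Feeding this into Lemma~\ref{f-frame-hellified} (which applies since $G$ is Helly) yields an isometric subgraph $H_3^{k,l}$ of $G$ with $a,b,c,d$ as corner points; as $k=l$ this is $H_3^{k}$ with $k=\lfloor\tau(G)/2\rfloor$, as required.

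For the converse, assume $\tau(G)$ is odd and $G$ contains $H_3^{k}$ with $k=\lfloor\tau(G)/2\rfloor$ as an isometric subgraph; write $a,b,c,d$ for its four corner vertices. Since $H_3^{k}=H_3^{k,k}$ is an $\{a,b,c,d\}$-distance-preserving copy of the subgraph in Fig.~\ref{fig:unhellified-frames}(c) with $k=l$, and since it sits isometrically in $G$, we read off $d_G(a,c)=d_G(b,d)=2k+3$ and $d_G(a,b)=d_G(b,c)=d_G(c,d)=d_G(d,a)=k+2$. Hence among the three distance sums the two largest are $d_G(a,c)+d_G(b,d)=4k+6$ and $d_G(a,b)+d_G(c,d)=2k+4$, so $2hb(a,b,c,d)=2k+2$. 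Because $\tau(G)$ is odd with $\lfloor\tau(G)/2\rfloor=k$, we have $\tau(G)=2k+1$, whence $2hb(G)\ge 2hb(a,b,c,d)=2k+2=\tau(G)+1$. Combined with the upper bound $2hb(G)\le\tau(G)+1$ already established, this forces $2hb(G)=\tau(G)+1$.

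Essentially all of the work is already carried by Lemmas~\ref{half-thin}, \ref{hyperbolicity-less-than-thiness}, and \ref{f-frame-hellified}; the only genuinely new point is the routine computation that the four corner vertices of $H_3^{k}$ realize hyperbolicity $k+1$. I expect the one place that needs a little care to be precisely this bookkeeping of corner distances, including the convention $H_i^{k}=H_i^{k,k}$ and the fact that these distances are inherited unchanged from Fig.~\ref{fig:unhellified-frames}(c) via the distance-preserving property; no further obstacle is anticipated.
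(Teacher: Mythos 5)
Your proposal is correct and follows exactly the paper's route: the theorem is stated there as the combination of Lemma~\ref{half-thin}, Lemma~\ref{hyperbolicity-less-than-thiness}, and Lemma~\ref{f-frame-hellified}, and your corner-distance computation for the converse (yielding $2hb(a,b,c,d)=2k+2=\tau(G)+1$) is the same verification the paper defers to at the end of the proof of Lemma~\ref{hyperbolicity-less-than-thiness}. No gaps.
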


 \begin{corollary} \label{cor:thinness-hyperbolicity}
  For every Helly graph $G$, if $\tau(G)$ is even, then $hb(G)$ is an integer and \heather{$2hb(G)=\tau(G)$.}
 \end{corollary}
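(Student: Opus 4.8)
This statement is a direct consequence of Theorem~\ref{thinness-hyperbolicity}, so the plan is simply to combine that theorem with the elementary observation that hyperbolicity of a graph is always a half-integer. First I would recall that for any four vertices $u,v,w,x$ of a graph, $hb(u,v,w,x)$ is one half of the difference of two integer distance sums, hence a non-negative half-integer; consequently $hb(G)=\max_{u,v,w,x} hb(u,v,w,x)$ is a half-integer and $2hb(G)$ is a non-negative integer.

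Next, assume $\tau(G)$ is even. By Theorem~\ref{thinness-hyperbolicity} we have $\tau(G)\le 2hb(G)\le \tau(G)+1$, and the upper bound is attained, i.e.\ $2hb(G)=\tau(G)+1$, only if $\tau(G)$ is odd. Since $\tau(G)$ is even by hypothesis, this case is excluded, so $2hb(G)\le \tau(G)$. Together with the lower bound $\tau(G)\le 2hb(G)$ (which is Lemma~\ref{half-thin}, and also the left inequality of Theorem~\ref{thinness-hyperbolicity}), we conclude $2hb(G)=\tau(G)$.

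Finally, since $\tau(G)$ is even, $hb(G)=\tau(G)/2$ is an integer, which completes the argument. There is essentially no obstacle here: the only point requiring a moment's care is verifying that $2hb(G)$ is an integer so that the equality $2hb(G)=\tau(G)$ indeed forces $hb(G)$ to be an integer when $\tau(G)$ is even; this is immediate from the definition of the hyperbolicity parameter.
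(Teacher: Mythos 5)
Your proposal is correct and matches the paper's intent: the corollary is stated without proof as an immediate consequence of Theorem~\ref{thinness-hyperbolicity}, obtained exactly as you describe by combining the sandwich $\tau(G)\le 2hb(G)\le\tau(G)+1$ with the fact that the upper bound is attained only for odd $\tau(G)$. Your explicit remark that $2hb(G)$ is an integer (so that excluding the value $\tau(G)+1$ really forces $2hb(G)=\tau(G)$) is the one small point the paper leaves implicit, and you handle it correctly.
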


 %
 %
 \section{Three isometric subgraphs of the King-grid are the only obstructions to a small hyperbolicity in Helly graphs} \label{sec:obstructions}

In this section, we will identify three isometric subgraphs of the King-grid that are responsible for the hyperbolicity of a Helly graph $G$.
These are named $H_1^k$, $H_2^k$, $H_3^k$, and are shown in Fig. \ref{fig:frames-hellified}. We may assume that $hb(G)>0$ as the structure of any graph with hyperbolicity 0 is well-known; they are exactly the block graphs, i.e., graphs where each biconnected component is a complete graph~\cite{Gromov87}.


The following lemma shows the existence of one of the three isometric subgraphs in a Helly graph $G$ with $hb(G)=k>0$.
\begin{lemma} \label{lem:new-helly-symmetric-reduction}
  Let $G$ be a Helly graph with $hb(G) = k > 0$. \\
If $\tau(G)=2k$ and $k$ is an integer, then $G$ contains $H_1^k$  as an isometric subgraph.\\
If $\tau(G)=2k$ and $k$ is a half-integer, then $G$ contains $H_2^{k - \frac{1}{2}}$ as an isometric subgraph. \\
If $\tau(G)=2k-1$, then $k$ is an integer and $G$ contains $H_3^{k-1}$  as an isometric subgraph.
\end{lemma}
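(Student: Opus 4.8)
By Theorem~\ref{thinness-hyperbolicity} we have $\tau(G)\le 2hb(G)\le\tau(G)+1$, so $\tau(G)\in\{2k-1,2k\}$, and the plan is to treat these two possibilities separately. The case $\tau(G)=2k-1$ is essentially immediate from the theorem itself: since $\tau(G)$ is an integer, $2k$ is an integer, and if $k$ were a proper half-integer then $\tau(G)=2k-1$ would be even, so Corollary~\ref{cor:thinness-hyperbolicity} would force $hb(G)$ to be an integer, a contradiction; hence $k$ is an integer, $\tau(G)=2k-1$ is odd, and $2hb(G)=\tau(G)+1$, so the ``furthermore'' part of Theorem~\ref{thinness-hyperbolicity} supplies an isometric $H_3^{\lfloor\tau(G)/2\rfloor}=H_3^{k-1}$.

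For $\tau(G)=2k$ I would build, out of a diametral pair of a slice, an $\{a,b,c,d\}$-distance-preserving subgraph of the shape in Fig.~\ref{fig:unhellified-frames}(a) if $k$ is an integer and of the shape in Fig.~\ref{fig:unhellified-frames}(b) if $k$ is a half-integer, and then hellify it via Lemma~\ref{s-frame-hellified} (yielding an isometric $H_1^{k,k}=H_1^{k}$) or Lemma~\ref{d-frame-hellified} (yielding an isometric $H_2^{k-\frac12,k-\frac12}=H_2^{k-\frac12}$). Since $\tau(G)=2k$ is attained, there are vertices $p,q$, an index $j$, and $b,d\in S_j(p,q)$ with $d(b,d)=\tau(G)=2k$; then $d(p,b)=d(p,d)=j$, $d(q,b)=d(q,d)=d(p,q)-j=:j'$, and the triangle inequality ($2k=d(b,d)\le 2j$ and $\le 2j'$) gives $j\ge k$ and $j'\ge k$. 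The crux is to apply Lemma~\ref{centroids}/Lemma~\ref{fnew} to the triples $\{p,b,d\}$ and $\{q,b,d\}$ and to observe that the governing Gromov products are $(b\smid d)_p=j-k$ and $(b\smid d)_q=j'-k$, which are integers exactly when $k$ is an integer --- i.e.\ precisely the condition distinguishing alternative~(1) (``common vertex'') from alternative~(2) (``triangle'') in Lemma~\ref{centroids}.

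If $k$ is an integer, alternative~(1) produces a vertex $a$ on common shortest paths $P(p,b),P(p,d),P(b,d)$ with $d(p,a)=j-k$, hence $d(a,b)=d(a,d)=k$ and $a\in I(b,d)$; symmetrically one obtains $c$ with $d(q,c)=j'-k$ and $d(c,b)=d(c,d)=k$. As $a\in P(p,b)$ and $P(p,b)$ extends to a shortest $(p,q)$-path, $a\in I(p,q)$, and likewise $c\in I(p,q)$; from $d(p,a)=j-k$ and $d(p,c)=j+k$ one reads off $d(a,c)=2k$, and with $d(b,d)=2k$ this is a Fig.~\ref{fig:unhellified-frames}(a) configuration with $l=k$. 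If $k$ is a half-integer, put $m:=k-\frac12$; then $d(b,d)=2m+1$ is odd, the two Gromov products are half-integers, and alternative~(2) produces at $p$ a triangle $\bigtriangleup(a,a_b,a_d)$ with $a_b\in P(p,b)$, $a_d\in P(p,d)$, $a_ba_d$ on $P(b,d)$, $d(p,a)=j-m-1$, $d(a_b,b)=d(a_d,d)=m$, and symmetrically at $q$ a triangle $\bigtriangleup(c,c_b,c_d)$ with $d(q,c)=j'-m-1$; the same bookkeeping on $I(p,q)$ gives $d(a,c)=2m+2$, $d(a,b)=d(a,d)=d(c,b)=d(c,d)=m+1$, and the path identities through the triangle edges (e.g.\ $d(b,d)=d(a_d,d)+1+d(a_b,b)$ and $d(a,c)=d(a,a_b)+d(a_b,c_b)+d(c_b,c)$) are immediate --- a Fig.~\ref{fig:unhellified-frames}(b) configuration with $k=l=m$. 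Lemma~\ref{s-frame-hellified} (resp.\ Lemma~\ref{d-frame-hellified}) then finishes.

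The bulk of the work is the distance bookkeeping --- verifying that the chosen shortest subpaths genuinely lie inside $I(p,q)$, so that $d(a,c)$ is the difference of two positions, and that every claimed distance is exact. The one genuinely important decision, and the thing that keeps the argument short, is to build everything not from an arbitrary hyperbolicity-realizing quadruple but from the two endpoints $p,q$ of a slice together with a diametral pair $b,d$ of that slice: this forces two of the three distance sums to coincide, pins $(b\smid d)_p$ and $(b\smid d)_q$ to $j-k$ and $j'-k$, and hence converts the parity of $k$ directly into the ``common vertex vs.\ triangle'' dichotomy, with the symmetric relation $l=k$ coming out for free.
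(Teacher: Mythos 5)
Your proof is correct and follows essentially the same route as the paper's: the $\tau(G)=2k-1$ case is read off from the ``furthermore'' clause of Theorem~\ref{thinness-hyperbolicity}, and the $\tau(G)=2k$ case is built from a diametral pair $b,d$ of a slice of a thinness-realizing interval by applying Lemma~\ref{fnew} to the triples $\{p,b,d\}$ and $\{q,b,d\}$ (the paper's $x',y'$ are exactly your $a,c$ at Gromov-product depth $j-k$, resp.\ the triangles in the half-integer case), followed by Lemma~\ref{s-frame-hellified} or Lemma~\ref{d-frame-hellified}. The only cosmetic difference is that you certify $d(a,c)=2k$ via positions in $I(p,q)$ plus the path through $b$, where the paper writes the same bound as a two-sided triangle-inequality computation.
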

\begin{proof}
  Let $hb(G) = k > 0$, and let interval $I(x,y)$
  realize the maximum thinness, that is there are vertices $z,t \in S_\alpha(x,y)$, for some integer $\alpha$,  such that $d(z,t)=\tau(G)$.
  By Theorem \ref{thinness-hyperbolicity}, either $\tau(G)=2k$ or $\tau(G)=2k-1$. 
  If $\tau(G)=2k-1$, then by Theorem \ref{thinness-hyperbolicity}, $\tau(G)$ is odd (thus $k$ is an integer) and $G$ contains $H_3^{k-1}$
  as an isometric subgraph.
  If $\tau(G)=2k$, then $\tau(G)$ can be even or odd (since $k$ can be a half-integer).
  Set $\alpha := d(x,t) = d(x,z)$, and $\beta := d(t,y) = d(z,y)$.

  Let $\tau(G)=2k$ be even (thus $k$ is an integer).
  Clearly $\alpha \geq k$ and $\beta \geq k$, otherwise $d(z,t) < 2k$.
  By Lemma \ref{fnew},
  there is a vertex $x'$ such that $d(x,x')=\alpha-k$, $d(z,x')=k$, and $d(t,x')=k$, and
  there is a vertex $y'$ such that $d(y,y')=\beta-k$, $d(z,y')=k$, and $d(z,y')=k$.
  By the triangle inequality, $d(x',y') \leq d(x',z) + d(z,y') = 2k$ and $\alpha + \beta = d(x,y) \leq \alpha - k + d(x',y') + \beta - k\le \alpha + \beta$.
  Therefore, $d(x',y')=2k$ must hold. Then, by Lemma \ref{s-frame-hellified}, $G$ contains an isometric subgraph $H_1^k$ with $\{x',z,y',t\}$ as corner points.

  Let $\tau(G)=2k$ be odd (thus $k$ is a half-integer).
  Let $k=p + \frac{1}{2}$ for an integer $p$. Then $d(z,t)=2p+1$.
  Clearly $\alpha > p$ and $\beta > p$, otherwise $d(z,t) < 2p+1$.
  By Lemma \ref{fnew}, there is a triangle $\triangle(x',x_z,x_t)$ such that
  $d(x,x')=\alpha-p-1$, $d(x_z,z)=p$, and $d(x_t,t)=p$,
  and there is a triangle $\triangle(y',y_z,y_t)$ such that
  $d(y,y')=\beta-p-1$, $d(y_z,z)=p$, and $d(y_t,t)=p$.
  By the triangle inequality, $d(x',y') \leq d(x',x_z) + d(x_z,z)+d(z,y_z,)+d(y_z,y')= 2p + 2$ and
  $\alpha + \beta = d(x,y) \leq \alpha - p - 1 + d(x',y') + \beta - p - 1 = \alpha + \beta$.
  Therefore, $d(x',y')=2p+2$.
  Since $p=k-\frac{1}{2}$, by Lemma \ref{d-frame-hellified}, $G$ contains an isometric subgraph $H_2^{k - \frac{1}{2}}$ with $\{x',z,y',t\}$ as corner points.~\qed
\end{proof}

Using the previous lemma, we can now characterize  Helly graphs $G$ with $hb(G) \leq \delta$
based on three forbidden isometric subgraphs. Whether $\delta$ is an integer or a half-integer
determines which of the $H_1^k$, $H_2^k$, $H_3^k$ graphs are forbidden and the
value of $k$.

\begin{theorem} \label{general-forbidden}
  Let $G$ be a Helly graph and $k$ be a non-negative integer. \\
  - $hb(G) \leq k$
  if and only if
  $G$ contains no $H_2^{k}$ as an isometric subgraph. \\
 - $hb(G) \leq k+\frac{1}{2}$
  if and only if
  $G$ contains neither $H_1^{k+1}$ nor $H_3^{k}$ as an isometric subgraph.
\end{theorem}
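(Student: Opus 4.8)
\medskip
\noindent\emph{Proof plan.}
I would derive the theorem from Theorem~\ref{thinness-hyperbolicity} and Lemma~\ref{lem:new-helly-symmetric-reduction}. The first step is to collect a few basic facts about the three obstruction graphs; write $A\preceq B$ to mean that $A$ is isomorphic to an isometric subgraph of $B$ (this relation is transitive, and $A\preceq B$ implies $hb(A)\le hb(B)$, since the two graphs induce the same metric on $V(A)$). I would establish:
\begin{enumerate}
\item[(i)] $hb(H_1^k)\ge k$, \ $hb(H_2^k)\ge k+\tfrac12$, \ $hb(H_3^k)\ge k+1$, by evaluating the four-point condition on the four corner vertices $a,b,c,d$ whose pairwise distances are recorded in Lemmas~\ref{s-frame-hellified}--\ref{f-frame-hellified} (take $k=l$): the distance-sum triples on $a,b,c,d$ are $\{4k,2k,2k\}$, $\{4k+3,2k+2,2k+2\}$, $\{4k+6,2k+4,2k+4\}$, so $hb(a,b,c,d)$ equals $k$, $k+\tfrac12$, $k+1$ respectively;
\item[(ii)] $H_i^{m}\preceq H_i^{m'}$ for all $i\in\{1,2,3\}$ and $m\le m'$ (monotonicity in the superscript);
\item[(iii)] $H_1^{k}\preceq H_2^{k}\preceq H_3^{k}$, and $H_2^{k}\preceq H_1^{k+1}$.
\end{enumerate}
For (iii), the first two inclusions are exactly what the constructions in the proofs of Lemmas~\ref{d-frame-hellified} and~\ref{f-frame-hellified} produce (an isometric $H_1^{k,l}$ sitting inside $H_2^{k,l}$, and an isometric $H_2^{k,l}$ sitting inside $H_3^{k,l}$), specialized to $k=l$ and instantiated with the ambient Helly graph taken to be $H_2^k$, resp.\ $H_3^k$; (ii) follows the same way, or directly from the King-grid picture, where each $H_i^k$ is the intersection of two metric intervals and hence a convex (so isometric) subregion. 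The cross-inclusion $H_2^{k}\preceq H_1^{k+1}$ I would obtain by applying Lemma~\ref{d-frame-hellified} with the Helly graph being $H_1^{k+1}$ itself: one only has to exhibit inside the diamond $H_1^{k+1}$ four vertices $a,b,c,d$ with two triangles realizing the distances of Fig.~\ref{fig:unhellified-frames}(b) with parameters $(k,k)$ (in the $\ell_\infty$ model $H_1^{k+1}=\{(x,y)\in\mathbb Z^2:\,|y|\le x\le 2k+2-|y|\}$, the corners $(0,0)$, $(2k+2,0)$, $(k+1,k+1)$, $(k+1,-k)$ are a natural candidate), which is a finite verification.

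Granting (i)--(iii), the ``only if'' directions are immediate by contraposition: if $H_2^{k}\preceq G$ then $hb(G)\ge k+\tfrac12>k$, and if $H_1^{k+1}\preceq G$ or $H_3^{k}\preceq G$ then $hb(G)\ge k+1>k+\tfrac12$.

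For the ``if'' directions I would again contrapose. Put $\delta:=hb(G)$ and suppose $\delta>k$ (part~1) or $\delta>k+\tfrac12$ (part~2); in both cases $\delta>0$, so Lemma~\ref{lem:new-helly-symmetric-reduction} applies, and Theorem~\ref{thinness-hyperbolicity} gives $\tau(G)\in\{2\delta-1,2\delta\}$. Lemma~\ref{lem:new-helly-symmetric-reduction} then supplies an isometric copy in $G$ of $H_1^{\delta}$ (if $\delta\in\mathbb Z$ and $\tau(G)=2\delta$), of $H_2^{\delta-1/2}$ (if $\delta\notin\mathbb Z$ and $\tau(G)=2\delta$), or of $H_3^{\delta-1}$ (if $\tau(G)=2\delta-1$, which forces $\delta\in\mathbb Z$). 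In part~1, $\delta>k$ makes these superscripts at least $k+1$, $k$, $k$ respectively, so by (ii) and (iii) one has $H_2^{k}\preceq H_1^{k+1}\preceq H_1^{\delta}\preceq G$, or $H_2^{k}\preceq H_2^{\delta-1/2}\preceq G$, or $H_2^{k}\preceq H_3^{k}\preceq H_3^{\delta-1}\preceq G$; in every case $H_2^{k}\preceq G$. In part~2, $\delta>k+\tfrac12$ makes the superscripts at least $k+1$, $k+1$, $k$ respectively, so $H_1^{k+1}\preceq H_1^{\delta}\preceq G$, or $H_1^{k+1}\preceq H_2^{k+1}\preceq H_2^{\delta-1/2}\preceq G$, or $H_3^{k}\preceq H_3^{\delta-1}\preceq G$; in every case $H_1^{k+1}\preceq G$ or $H_3^{k}\preceq G$. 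This establishes both contrapositives.

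The routine ingredients are (i) and the concluding case analysis; the essential work is (ii)--(iii), and in particular the cross-inclusion $H_2^{k}\preceq H_1^{k+1}$, where one must check that a proposed embedding preserves \emph{every} pairwise distance and not merely the distances among the four distinguished corners. I expect that to require the most care, although the $\ell_\infty$ description of the King-grid together with the convexity of metric intervals in Helly graphs should make the verification manageable.
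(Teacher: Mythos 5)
Your proposal is correct and follows essentially the same route as the paper: the ``only if'' directions come from evaluating the four-point condition on the four corner vertices of $H_1^{k+1}$, $H_2^{k}$, $H_3^{k}$ together with monotonicity of hyperbolicity under isometric subgraphs, and the ``if'' directions come from Lemma~\ref{lem:new-helly-symmetric-reduction} combined with the inclusion relations among the $H_i^{m}$ (including the cross-inclusion $H_2^{m}$ isometric in $H_1^{m+1}$, which the paper asserts without proof but uses exactly as you do). Your explicit attention to verifying those inclusions, in particular $H_2^{k}\preceq H_1^{k+1}$, is more careful than the paper's one-line assertion but does not change the structure of the argument.
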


\begin{proof}
  Assume $hb(G) \leq k$ and that
  $G$ has $H_2^{k}$ as an isometric subgraph.
  It is easy to check that  $hb(H_2^{k}) = k + \frac{1}{2} > k$ (the hyperbolicity realizes on four extreme vertices).
  As the hyperbolicity of a graph is at least the hyperbolicity of its
  isometric subgraph, $hb(G) > k$, giving a contradiction.

  Assume $hb(G) \leq k+\frac{1}{2}$, and that
  $G$ has $H_1^{k+1}$ or $H_3^{k}$  as an isometric subgraph.
  It is easy to check that $hb(H_1^{k+1}) = k+1 > k+\frac{1}{2}$ and
  $hb(H_3^{k}) = k+1 > k+\frac{1}{2}$ (the hyperbolicity of each realizes on four extreme vertices).
  As the hyperbolicity of a graph is at least the hyperbolicity of its
  isometric subgraph, $hb(G) > k+\frac{1}{2}$, giving a contradiction.

  For the other direction, assume $hb(G)=\delta$. Then, by Lemma \ref{lem:new-helly-symmetric-reduction}, $G$ has one of $H_1^\delta$, $H_2^{\delta-\frac{1}{2}}$, $H_3^{\delta-1}$ as an isometric subgraph. Note that, for any integer $m$, $H_1^m$ is an isometric subgraph of $H_2^m$ and $H_1^{m+1}$, $H_2^m$ is an isometric subgraph of $H_3^{m}$, $H_2^{m+1}$ 
  and $H_1^{m+1}$, and $H_3^m$ is an isometric subgraph of $H_3^{m+1}$. If $\delta$ is an integer, $G$ contains $H_1^\delta$ or $H_3^{\delta-1}$, and hence $H_1^{k+1}$ or $H_3^{k}$ when $\delta>k+\frac{1}{2}$, as an isometric subgraph.
  If $\delta$ is a half-integer, $G$ contains $H_2^{\delta-\frac{1}{2}}$, and hence $H_2^{k}$ when $\delta\geq k+\frac{1}{2}$, as an isometric subgraph. \qed
\end{proof}

Theorem  \ref{general-forbidden}
can easily be applied to determine the forbidden subgraphs characterizing any
$\delta$-hyperbolic Helly graph. The corollaries that follow exemplify
this for $\frac{3}{2}$-hyperbolic Helly graphs and $2$-hyperbolic Helly graphs.

   \begin{figure}[htb]
  \begin{center}
     \vspace*{-3mm}
     \begin{minipage}[b]{6.7cm}
    \begin{center}
       \centering\image{}{
         \includegraphics[scale=.45]{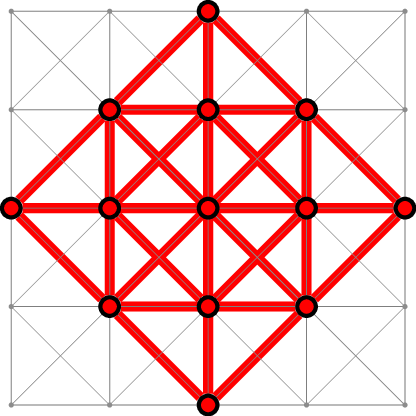}
       }\quad\image{}{
         \includegraphics[scale=.45]{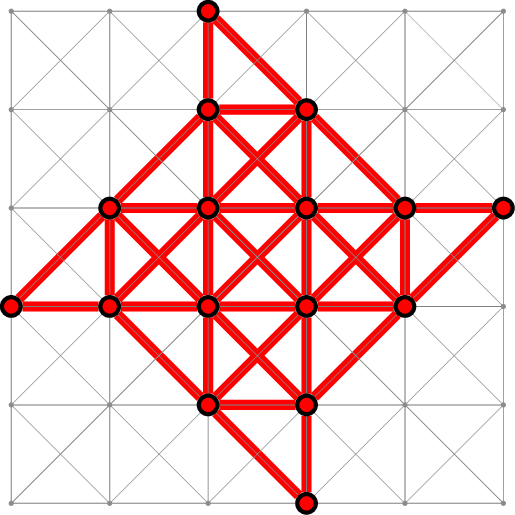}
       }\captionof{figure}{Forbidden isometric subgraphs for $\frac{3}{2}$-hyperbolic Helly graphs.}\label{forbidden-one-and-half-hyperbolic}
       \end{center}\setcounter{subfloat}{0}
       \end{minipage}
      \hspace*{1cm}
      \begin{minipage}[b]{6.5cm}
        \begin{center} \includegraphics[scale=.45]{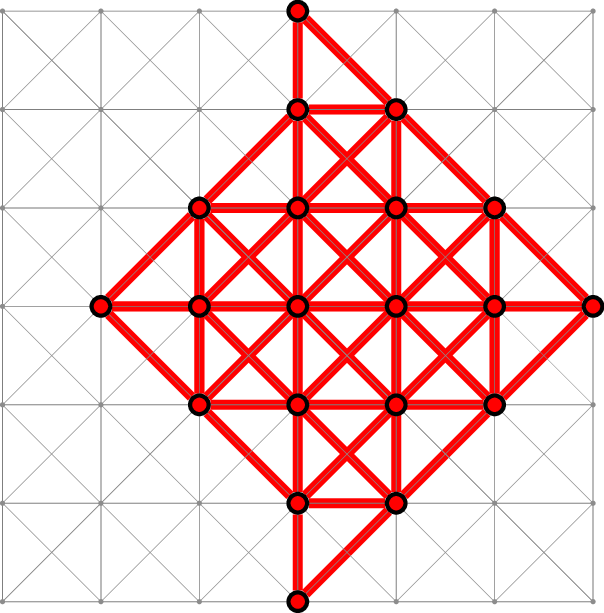}
        \end{center}
        \caption{\label{forbidden-two-hyperbolic}Forbidden isometric subgraph for $2$-hyperbolic Helly graphs.} %
      \end{minipage}
      \vspace*{-3mm}
      \end{center}
  \end{figure}

 \begin{corollary}
   A Helly graph is $\frac{3}{2}$-hyperbolic if and only if
   it contains neither of graphs from Fig. \ref{forbidden-one-and-half-hyperbolic} as an isometric subgraph.
\end{corollary}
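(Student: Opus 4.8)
The plan is simply to instantiate Theorem~\ref{general-forbidden} at $k=1$. The second part of that theorem states that, for a Helly graph $G$ and a non-negative integer $k$, one has $hb(G)\le k+\frac{1}{2}$ if and only if $G$ contains neither $H_1^{k+1}$ nor $H_3^{k}$ as an isometric subgraph. Setting $k=1$ yields: $hb(G)\le\frac{3}{2}$ if and only if $G$ contains neither $H_1^{2}$ nor $H_3^{1}$ as an isometric subgraph. Since, by the definition of $\delta$-hyperbolicity recalled in the Preliminaries, ``$\frac{3}{2}$-hyperbolic'' means exactly $hb(G)\le\frac{3}{2}$, the corollary will follow once we note that the two graphs depicted in Fig.~\ref{forbidden-one-and-half-hyperbolic} are precisely $H_1^{2}$ (that is, $H_1^{k,l}$ with $k=l=2$) and $H_3^{1}$ (that is, $H_3^{k,l}$ with $k=l=1$), which is immediate from the descriptions of these graphs given in Section~\ref{sec:thinness}.

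There is no real obstacle here: the statement is a one-line specialization of Theorem~\ref{general-forbidden}, whose ``only if'' direction rests on the isometric-subgraph monotonicity of hyperbolicity together with the computations $hb(H_1^{2})=hb(H_3^{1})=2>\frac{3}{2}$ on the four extreme vertices, and whose ``if'' direction invokes Lemma~\ref{lem:new-helly-symmetric-reduction} and the containment relations among the $H_i^m$. The only point I would spell out explicitly in the write-up, and it is entirely routine, is the identification of the two pictures in Fig.~\ref{forbidden-one-and-half-hyperbolic} with $H_1^{2}$ and $H_3^{1}$; this identification, combined with the two hyperbolicity values above, also makes transparent why both forbidden subgraphs are genuinely needed rather than just one of them.
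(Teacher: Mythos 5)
Your proposal is correct and matches the paper exactly: the paper gives no separate proof for this corollary, simply noting that it is obtained by applying Theorem~\ref{general-forbidden} (with $k=1$ in the half-integer case), and the two graphs in Fig.~\ref{forbidden-one-and-half-hyperbolic} are indeed $H_1^{2}$ and $H_3^{1}$. Nothing further is needed.
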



 \begin{corollary}
   A Helly graph is $2$-hyperbolic if and only if
    it does not contain graph from Fig. \ref{forbidden-two-hyperbolic} as an isometric subgraph.
\end{corollary}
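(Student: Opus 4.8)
The plan is to read this off directly from Theorem~\ref{general-forbidden}. First I would note that ``a Helly graph is $2$-hyperbolic'' should be understood as $hb(G)\le 2$ (the easy direction below would fail otherwise, since e.g.\ a tree is $0$-hyperbolic and contains no such isometric subgraph), and that the graph exhibited in Fig.~\ref{forbidden-two-hyperbolic} is exactly $H_2^{k,l}$ with $k=l=2$, i.e.\ the graph denoted $H_2^{2}$. Since $\delta=2$ is a non-negative integer, the first clause of Theorem~\ref{general-forbidden} applies with $k=2$, and it states precisely: $hb(G)\le 2$ if and only if $G$ contains no isometric $H_2^{2}$. That is already the corollary.

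For completeness I would then unpack the two implications, both of which are contained in the proof of Theorem~\ref{general-forbidden}. \emph{Necessity:} $hb(H_2^{2})=2+\tfrac12>2$, and hyperbolicity cannot increase when passing to an isometric subgraph, so a $2$-hyperbolic Helly graph cannot contain $H_2^{2}$ isometrically. \emph{Sufficiency:} suppose $hb(G)=\delta>2$; since $\delta$ is necessarily a half-integer, $\delta\ge\tfrac52$, and Lemma~\ref{lem:new-helly-symmetric-reduction} yields an isometric copy of one of $H_1^{\delta}$, $H_2^{\delta-\frac12}$, $H_3^{\delta-1}$ in $G$. Using the isometric-subgraph inclusions among the $H_i^m$ recorded in the proof of Theorem~\ref{general-forbidden} (in particular that $H_2^{m}$ embeds isometrically into $H_2^{m+1}$, into $H_1^{m+1}$, and into $H_3^{m}$), each of those three graphs contains an isometric $H_2^{2}$ whenever $\delta\ge\tfrac52$, hence so does $G$.

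There is no genuine obstacle here: the entire mathematical content is already established in Lemma~\ref{lem:new-helly-symmetric-reduction} and Theorem~\ref{general-forbidden}, and all that remains is the bookkeeping that the pictured graph is $H_2^{2}$ and that the integer (rather than half-integer) branch of Theorem~\ref{general-forbidden} is the one to invoke. The analogous corollary for $\tfrac32$-hyperbolic Helly graphs is obtained in the same way from the second clause of Theorem~\ref{general-forbidden} with $k=1$, forbidding isometric copies of $H_1^{2}$ and $H_3^{1}$.
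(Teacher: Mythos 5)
Your proposal is correct and matches the paper exactly: the corollary is stated there without proof as an immediate instance of the first clause of Theorem~\ref{general-forbidden} with $k=2$, after identifying the pictured graph as $H_2^{2}$, which is precisely what you do (your further unpacking of the two directions just restates the proof of that theorem and Lemma~\ref{lem:new-helly-symmetric-reduction}). One trivial wording slip: in the sufficiency direction $\delta>2$ need not be a half-integer in the paper's sense (it could be the integer $3$); what you mean, and what the argument needs, is only that $\delta$ is a multiple of $\tfrac12$, hence $\delta\ge\tfrac52$.
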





 To give a few equivalent characterizations of $\frac{1}{2}$-hyperbolic Helly graphs, we will need one more lemma.
 Let $C_4$ denote an induced cycle on four vertices. We say that a graph $G$ is {\em $C_4$-free} if it does not contain $C_4$ as an induced subgraph.
 The graph $H_3^0$ is also known in the literature as the 4-sun $S_4$ \heather{(see Fig. \ref{fig:s4}).}

 \begin{lemma} [\cite{Dragan1993}] \label{c4-g2-then-s4}
   For any $C_4$-free Helly graph $G$, every $C_4$ in $G^2$ forms in $G$ an isometric subgraph $S_4$.
 \end{lemma}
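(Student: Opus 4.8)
**Proof proposal for Lemma~\ref{c4-g2-then-s4} (``For any $C_4$-free Helly graph $G$, every $C_4$ in $G^2$ forms in $G$ an isometric subgraph $S_4$'').**

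The plan is to start from an induced four-cycle $u_1 u_2 u_3 u_4$ in $G^2$, which by definition means $d_G(u_i,u_{i+1}) \le 2$ for consecutive indices while $d_G(u_1,u_3) \ge 3$ and $d_G(u_2,u_4) \ge 3$. Since $G$ is $C_4$-free, no two of the $u_i$ can be at distance exactly $1$ along a ``diagonal'' pair, and we first argue that each consecutive distance $d_G(u_i,u_{i+1})$ is in fact exactly $2$: if some $d_G(u_i,u_{i+1})=1$, the two opposite vertices would be forced too close via the triangle inequality, contradicting $d_G(u_1,u_3)\ge 3$ or $d_G(u_2,u_4)\ge 3$; and since the diagonals are at distance $\ge 3$, they are certainly $\ge 3$, so the only freedom to pin down is whether they equal $3$ exactly. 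The first real step, then, is to show the four ``outer'' distances are $2$ and the two ``diagonal'' distances are exactly $3$.

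Next I would locate the four ``mid-edge'' vertices. For each consecutive pair $u_i, u_{i+1}$ at distance $2$, pick a common neighbor; call it $m_i$. The key structural step is to invoke the Helly property (or Lemma~\ref{fnew}/Lemma~\ref{centroids}) on triples among $\{u_1,u_2,u_3,u_4\}$ together with the disk $D(u_i,2)\cap D(u_{i+1},2)$ type intersections to force the $m_i$ to be chosen coherently, so that they form a four-cycle $m_1 m_2 m_3 m_4$ in $G$ and each $u_i$ sits adjacent to exactly $m_{i-1}$ and $m_i$. Because $G$ is $C_4$-free, the induced subgraph on $\{u_1,m_1,u_2,m_2,u_3,m_3,u_4,m_4\}$ cannot contain an induced $C_4$, which forces extra chords in a controlled way; but the required diagonal distances $d_G(u_1,u_3)=d_G(u_2,u_4)=3$ and $d_G(m_i,m_{i+2})=2$ prevent the ``wrong'' chords (those would shortcut a diagonal). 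What survives is precisely the adjacency pattern of the $4$-sun $S_4$: an induced $C_4$ on $m_1 m_2 m_3 m_4$ with four pendant-like vertices $u_i$ each adjacent to one edge $m_{i-1}m_i$, no two $u_i$ adjacent, and no $u_i$ adjacent to the opposite edge.

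Finally I would verify the embedding is isometric: compute all pairwise distances inside the candidate $S_4$ and check they match $G$-distances. The $u_i$-$u_{i+1}$ distances are $2$ (realized through $m_i$), the $u_i$-$u_{i+2}$ distances are $3$ (by the established diagonal bound, and $\le 3$ via $u_i, m_i, m_{i+1}, u_{i+2}$), the $m_i$-$m_{i+1}$ distances are $1$, the $m_i$-$m_{i+2}$ distances are $2$ (they cannot be $1$ or $C_4$-freeness on the outer cycle is violated, and they are $\le 2$ through any $u$), and $d_G(u_i,m_j)$ is $1$ or $2$ as appropriate. I expect the main obstacle to be the coherent choice of the $m_i$: a priori the common neighbors realizing each outer distance need not fit together into a single cycle, and ruling out the degenerate configurations (two of the $m_i$ coinciding, or an unwanted chord among the $m_i$ or between a $u_i$ and a non-adjacent $m_j$) is where $C_4$-freeness and the Helly property must be combined carefully, presumably by repeatedly applying Lemma~\ref{fnew} to the triples $\{u_{i-1},u_i,u_{i+1}\}$ and using that each such triple yields configuration~(2) of Fig.~\ref{fig:pseudomodular} because the relevant Gromov products are half-integers.
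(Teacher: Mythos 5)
First, note that the paper does not prove Lemma~\ref{c4-g2-then-s4} at all --- it is imported from \cite{Dragan1993} --- so there is no internal proof to compare against; judged on its own merits, your outline has a structural error that would make the argument fail. You are building the wrong graph. The graph $S_4=H_3^0$ (Fig.~\ref{fig:s4}) is the \emph{complete} $4$-sun: its four inner vertices induce a $K_4$, not a four-cycle. Your target configuration --- ``an induced $C_4$ on $m_1m_2m_3m_4$'' with $d_G(m_i,m_{i+2})=2$ --- is itself an induced $C_4$ in $G$ (a four-cycle with both diagonals missing), which is exactly what the hypothesis forbids; if your construction succeeded, it would produce a contradiction rather than an isometric $S_4$. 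In fact $C_4$-freeness forces the diagonals $m_1m_3$ and $m_2m_4$ to be edges, and this is also why Corollary~\ref{half-hyperbolic-characterization}(ii) must list $C_4$ and $S_4$ as two separate obstructions: if $S_4$ contained an induced $C_4$, forbidding $C_4$ alone would suffice. The distance table you verify in your final step is therefore inconsistent with the statement being proved.

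Second, your first step is not actually justified. The triangle inequality does not exclude $d_G(u_i,u_{i+1})=1$: from $d_G(u_1,u_2)=1$ and $d_G(u_2,u_3)\le 2$ you only get $d_G(u_1,u_3)\le 3$, which is perfectly compatible with $d_G(u_1,u_3)\ge 3$; likewise nothing in your outline excludes a diagonal of length~$4$. Both degenerate cases can be ruled out, but only by combining the Helly property with $C_4$-freeness (for example, a side of length $1$ lets one use Helly to manufacture two distinct non-adjacent common neighbors of a non-adjacent pair, i.e., an induced $C_4$ in $G$); this is where much of the real work lies, together with the ``coherent choice'' of the $m_i$, which you correctly identify as the main obstacle but for which you supply no actual mechanism. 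As written, the proposal is a plan whose two load-bearing steps are, respectively, wrong and missing.
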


\heather{
   \begin{figure}[bt]
      \centering\begin{minipage}[b]{\linewidth}
       \centering\includegraphics[scale=.49]{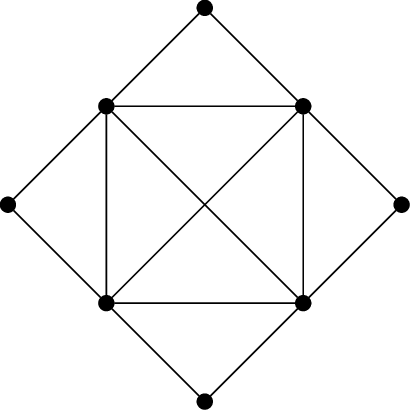}
       \caption{\label{fig:s4}The graph $H_3^0$, also known as the 4-sun $S_4$.} %
     \end{minipage}
 \end{figure}
}

 \heather{
 By combining Theorem \ref{thinness-hyperbolicity}, Theorem \ref{general-forbidden} and Lemma \ref{c4-g2-then-s4},
 we obtain the following characterization of $\frac{1}{2}$-hyperbolic Helly graphs.
 One necessary and sufficient condition is that $G$ and $G^2$ are $C_4$-free.
 In fact, in the characterization of any $\frac{1}{2}$-hyperbolic graph \cite{Coudert14},
 there is a similar requirement that
 every graph power $G^i$ for $i \geq 1$ is $C_4$-free and one additional graph is $C_4$-free.
 We explore this relationship between $C_4$-free graph powers and the $\delta$-hyperbolicity
 of any Helly graph in subsequent results presented here.

 }

 \begin{corollary} \label{half-hyperbolic-characterization}
   The following statements are equivalent for any Helly graph $G$:
 \begin{enumerate}[i)]
 \item $G$ is $\frac{1}{2}$-hyperbolic;
 \item $G$ has neither $C_4$ nor $S_4$ as an isometric subgraph;
 \item Neither $G$ nor $G^2$ has an induced $C_4$;
 \item $\tau(G) \leq 1$ and $G$ has no $S_4$ as an isometric subgraph.
 \end{enumerate} \end{corollary}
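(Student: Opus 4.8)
The plan is to prove the four statements mutually equivalent via the biconditionals $(i)\Leftrightarrow(ii)$, $(ii)\Leftrightarrow(iii)$, and $(i)\Leftrightarrow(iv)$, drawing on Theorem~\ref{general-forbidden}, Theorem~\ref{thinness-hyperbolicity}, Corollary~\ref{cor:thinness-hyperbolicity}, and Lemma~\ref{c4-g2-then-s4}. Three simple facts will be used throughout: (a) an induced $C_4$ is automatically an isometric $C_4$, since two opposite vertices of an induced $C_4$ are non-adjacent and share two neighbours, hence lie at distance exactly $2$ in $G$; so for $G$ ``having an induced $C_4$'' and ``having an isometric $C_4$'' are the same condition; (b) $2\,hb(G)$ is always an integer, because each $hb(u,v,w,x)$ is half a difference of integers; (c) $hb(C_4)=1$ and $H_3^0=S_4$ (noted before the statement), and $hb(H_3^0)=1$ is easy to check (it is the value $k+1$ at $k=0$), so $hb(S_4)=1$.

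For $(i)\Leftrightarrow(ii)$: the implication $(i)\Rightarrow(ii)$ is immediate, since hyperbolicity does not increase under isometric subgraphs while $hb(C_4)=hb(S_4)=1>\frac{1}{2}$. For $(ii)\Rightarrow(i)$ I would apply Theorem~\ref{general-forbidden} with $k=0$: it suffices to forbid an isometric $H_1^{1}$ and an isometric $H_3^{0}$. The latter is $S_4$, which $(ii)$ already excludes. For the former, one checks from the construction (Fig.~\ref{fig:frames-hellified}(a) with $k=l=1$) that $H_1^{1}$ is the $5$-vertex wheel, i.e.\ a $C_4$ together with one vertex adjacent to all four of its vertices --- this extra vertex being forced by the Helly property since $C_4$ itself is not Helly. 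In particular $H_1^{1}$ contains an isometric $C_4$ on its rim, so an isometric $H_1^{1}$ in $G$ would force an isometric $C_4$ in $G$, which $(ii)$ forbids; hence $G$ has neither an isometric $H_1^1$ nor an isometric $H_3^0$, and $hb(G)\le\frac{1}{2}$. (Conversely, the Helly property makes $C_4$ and $H_1^1$ interchangeable obstructions: the four unit disks centred at the vertices of any isometric $C_4$ in $G$ pairwise intersect, hence share a common vertex, which must be adjacent to all four, producing an isometric $H_1^1$.)

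For $(ii)\Leftrightarrow(iii)$: by fact (a) the first halves of $(ii)$ and $(iii)$ coincide (``$G$ is $C_4$-free''), so it remains to show, for a $C_4$-free Helly graph $G$, that $G$ has no isometric $S_4$ iff $G^2$ has no induced $C_4$. An induced $C_4$ in $G^2$ produces an isometric $S_4$ in $G$ by Lemma~\ref{c4-g2-then-s4}; conversely, an isometric $S_4=H_3^0$ in $G$ has four corner vertices $a,b,c,d$ with $d_G(a,b)=d_G(b,c)=d_G(c,d)=d_G(d,a)=2$ and $d_G(a,c)=d_G(b,d)=3$ (the pairwise distances in $H_3^0$), and these vertices induce a $C_4$ in $G^2$. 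For $(i)\Leftrightarrow(iv)$: if $hb(G)\le\frac{1}{2}$ then $\tau(G)\le 2\,hb(G)\le 1$ by Lemma~\ref{half-thin} and $G$ has no isometric $S_4$ (else $hb(G)\ge hb(S_4)=1$), which is $(iv)$; conversely, assuming $(iv)$, either $\tau(G)=0$ and then $hb(G)=0$ by Corollary~\ref{cor:thinness-hyperbolicity}, or $\tau(G)=1$ and then Theorem~\ref{thinness-hyperbolicity} together with fact (b) gives $2\,hb(G)\in\{1,2\}$, the value $2=\tau(G)+1$ being impossible since it would force an isometric $H_3^{\lfloor \tau(G)/2\rfloor}=H_3^{0}=S_4$ in $G$; in both cases $hb(G)\le\frac{1}{2}$, i.e.\ $(i)$.

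I expect the only slightly delicate point to be pinning down $H_1^{1}$ --- identifying it with the $5$-vertex wheel and using the Helly property to relate isometric $C_4$'s and isometric $H_1^{1}$'s; everything else reduces to a direct appeal to the quoted results and a short distance check inside $S_4$.
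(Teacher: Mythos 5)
Your proposal is correct and follows essentially the same route as the paper, which derives this corollary by combining Theorem~\ref{thinness-hyperbolicity}, Theorem~\ref{general-forbidden} (at $k=0$) and Lemma~\ref{c4-g2-then-s4}; you have simply written out the details the paper leaves implicit, including the correct identification of $H_1^1$ with the $4$-wheel and of $H_3^0$ with $S_4$, and the distance checks showing that the corner points of these graphs yield the isometric $C_4$'s in $G$ and $G^2$. The additional appeals to Lemma~\ref{half-thin} and Corollary~\ref{cor:thinness-hyperbolicity} for the equivalence with (iv) are exactly what the paper intends.
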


  The following lemmas describe the three forbidden isometric subgraphs in terms of graph powers.

  \begin{lemma} \label{lem:h1-power}
    Let $G$ be a Helly graph and $k$ be a non-negative integer. Then $G$ has $H_1^{k+1}$ as an isometric subgraph
    if and only if there exist four vertices in $G$ that form $C_4$ in $G^\ell$ for all $\ell \in [k+1, 2k+1]$.
  \end{lemma}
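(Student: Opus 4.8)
The plan is to analyze the structure of $H_1^{k+1}$, which (by the construction in Lemma~\ref{s-frame-hellified} and Fig.~\ref{fig:frames-hellified}(a)) is the isometric subgraph of the King-grid obtained from a $\{a,b,c,d\}$-distance preserving subgraph with $d(a,c)=d(b,d)=2k+2$ and $d(a,b)=d(b,c)=d(c,d)=d(d,a)=k+1$. I would first establish the \emph{forward direction}: assuming $G$ has $H_1^{k+1}$ as an isometric subgraph with corners $a,b,c,d$, show that these same four corner vertices form a $C_4$ in $G^\ell$ for every $\ell\in[k+1,2k+1]$. Since $H_1^{k+1}$ is isometric in $G$, distances between $a,b,c,d$ are the same in $G$ as in $H_1^{k+1}$: the four ``side'' distances are $k+1$ and the two ``diagonal'' distances are $2k+2$. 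For $\ell\ge k+1$ each side becomes an edge in $G^\ell$, and for $\ell\le 2k+1$ each diagonal does \emph{not} become an edge (since $2k+2>\ell$), so $\{a,b,c,d\}$ induces exactly a $C_4$ in $G^\ell$ for all such $\ell$ — this is immediate.

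For the \emph{reverse direction} — the substantive part — I would start from four vertices $a,b,c,d$ forming an induced $C_4$ (with edges $ab,bc,cd,da$ and non-edges $ac,bd$) in $G^\ell$ for every $\ell\in[k+1,2k+1]$. The edge/non-edge conditions at the two extreme values $\ell=k+1$ and $\ell=2k+1$ translate, via the definition of graph power, into
\[
d_G(a,b)\le k+1,\ d_G(b,c)\le k+1,\ d_G(c,d)\le k+1,\ d_G(d,a)\le k+1,
\]
\[
d_G(a,c)\ge 2k+2,\qquad d_G(b,d)\ge 2k+2.
\]
By the triangle inequality $d_G(a,c)\le d_G(a,b)+d_G(b,c)\le 2k+2$ and similarly $d_G(b,d)\le 2k+2$, forcing all four side distances to equal exactly $k+1$ and both diagonals to equal exactly $2k+2$. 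Thus $\{a,b,c,d\}$ is automatically an $\{a,b,c,d\}$-distance preserving subgraph matching Fig.~\ref{fig:unhellified-frames}(a) with parameters $k=l=k+1$ (in the notation of that figure), provided we also check that there is no ``shortcut'': we need $G$ restricted to $a,b,c,d$ together with shortest paths among them to realize those distances, which it does since $G$ itself realizes them. Then I would invoke Lemma~\ref{s-frame-hellified} directly: a Helly graph containing an $\{a,b,c,d\}$-distance preserving configuration of type Fig.~\ref{fig:unhellified-frames}(a) with both parameters equal to $k+1$ contains an isometric $H_1^{k+1}$ with $a,b,c,d$ as corners.

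The step I expect to be the main obstacle is seeing that the hypothesis for the \emph{intermediate} values $\ell\in(k+1,2k+1)$ is not needed for the reverse direction but is genuinely needed (and automatically satisfied) for the forward direction — i.e.\ verifying the lemma's ``iff'' is exactly calibrated. Concretely, once the side distances are pinned to $k+1$ and the diagonals to $2k+2$, the induced-$C_4$ condition in $G^\ell$ holds for \emph{all} $\ell\in[k+1,2k+1]$ (sides are edges from $\ell\ge k+1$, diagonals fail to be edges until $\ell\ge 2k+2$), so no information is lost or gained at intermediate $\ell$; the only care required is to phrase the extraction of the two extreme constraints cleanly and to confirm that Lemma~\ref{s-frame-hellified}'s hypothesis is met verbatim. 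I would close by remarking that the equivalence could alternatively be stated using only the two endpoint powers $G^{k+1}$ and $G^{2k+1}$, which the proof makes transparent.
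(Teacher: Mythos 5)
Your proposal is correct and follows essentially the same route as the paper's proof: the forward direction reads off the corner distances of an isometric $H_1^{k+1}$, and the reverse direction extracts the bounds $d(x,y),d(y,z),d(z,t),d(t,x)\le k+1$ and $d(x,z),d(y,t)\ge 2k+2$ from the powers $G^{k+1}$ and $G^{2k+1}$, forces equality, and invokes Lemma~\ref{s-frame-hellified}. Your added observation that only the two endpoint powers are needed is a fair (and correct) refinement the paper leaves implicit.
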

  \begin{proof}
    Suppose $G$ has $H_1^{k+1}$ as an isometric subgraph.
    Then, for four extreme vertices $x$, $y$, $z$, $t$ of $H_1^{k+1}$, we have $d(x,y)=d(y,z)=d(z,t)=d(t,x)=k+1$  and
    $d(x,z)=d(y,t)=2k+2$. Thus, $x$, $y$, $z$, $t$, form $C_4$ in $G^\ell$ for all $\ell \in [k+1, 2k+1]$.

    Now, let $x$, $y$, $z$, $t$ be four vertices in $G$ that form $C_4$ in $G^\ell$ for all $\ell \in [k+1, 2k+1]$.
    Then, each of $d(x,y)$, $d(y,z)$, $d(z,t)$, $d(t,x)$ is less than or equal to $k+1$, and $d(x,z)$, $d(y,t)$ are
    greater than or equal to $2k+2$. From these distance requirements, necessarily, $d(x,y)=d(y,z)=d(z,t)=d(t,x)=k+1$ and $d(x,z)=d(y,t)=2k+2$. By Lemma \ref{s-frame-hellified}, $G$ has isometric $H_1^{k+1}$. \qed

  \end{proof}

A cycle on 4 vertices with one diagonal is called a {\sl diamond}.

  \begin{lemma} \label{int-graph-power}
    Let $G$ be a Helly graph and $k$ be a non-negative integer. Then $G$ has $H_2^{k}$ as an isometric subgraph if and only if
    there exist four vertices in $G$ that form $C_4$ in $G^\ell$ for all $\ell \in [k+1, 2k]$ and form a diamond in $G^{2k+1}$.
  \end{lemma}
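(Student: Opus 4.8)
\textbf{Proof proposal for Lemma~\ref{int-graph-power}.}

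The plan is to mirror the structure of the proof of Lemma~\ref{lem:h1-power}, translating the combinatorial definition of $H_2^k$ into statements about pairwise distances, and then invoking Lemma~\ref{d-frame-hellified} for the nontrivial direction. First I would argue the forward implication. Suppose $G$ has an isometric copy of $H_2^k$ with four extreme vertices $a,b,c,d$ corresponding to the input on Fig.~\ref{fig:unhellified-frames}(b). By the distance data recorded for that $\{a,b,c,d\}$-distance-preserving subgraph, $d_G(a,c)=d_G(b,d)=k+l+2$ reduces (with $k=l$) to $d_G(a,c)=d_G(b,d)=2k+2$, while $d_G(a,b)=d_G(c,d)=l+1=k+1$ and $d_G(b,c)=d_G(d,a)=k+1$. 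Hence in $G^\ell$ the four ``side'' distances $d_G(a,b),d_G(b,c),d_G(c,d),d_G(d,a)$ are all $\le \ell$ precisely when $\ell\ge k+1$, and the two ``diagonal'' distances $2k+2$ exceed $\ell$ exactly when $\ell\le 2k+1$; so $a,b,c,d$ induce a $C_4$ in $G^\ell$ for every $\ell\in[k+1,2k]$. For $\ell=2k+1$ the diagonals satisfy $d_G(a,c)=d_G(b,d)=2k+2>2k+1$ still, but one must check which additional adjacency appears; in fact with $\ell=2k+1$ one of the two diagonal pairs now becomes adjacent in $G^{2k+1}$ only if its distance is $\le 2k+1$, which it is not — so the diamond edge is not a diagonal of the $C_4$ at all, and I should be careful here: the diamond in $G^{2k+1}$ is formed by the four vertices $a,b,c,d$ together with the chord, and that chord corresponds to a pair whose $G$-distance is $2k+2$ in the quadrilateral but which in $H_2^k$ is bridged by the extra triangle structure making one of these distances realize as $2k+1$. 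I will need to recompute the exact distance between the appropriate pair (say $a$ and $c$, or $a$ and the triangle-apex) using the explicit structure of $H_2^k$ in Fig.~\ref{fig:frames-hellified}(b); the point is that precisely one diagonal has length $2k+1$ and the other $2k+2$, producing a diamond (a $C_4$ plus one chord) in $G^{2k+1}$.

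For the converse, let $x,y,z,t$ be four vertices of $G$ forming a $C_4$ in $G^\ell$ for every $\ell\in[k+1,2k]$ and a diamond in $G^{2k+1}$. The $C_4$-in-$G^{k+1}$ condition forces each of $d(x,y),d(y,z),d(z,t),d(t,x)\le k+1$ and each of $d(x,z),d(y,t)\ge k+2$; the $C_4$-in-$G^{2k}$ condition forces $d(x,z),d(y,t)\ge 2k+1$; and the diamond-in-$G^{2k+1}$ condition forces exactly one of $d(x,z),d(y,t)$ to equal $2k+1$ (the chord) while the other stays $\ge 2k+2$, and simultaneously keeps all four side-distances $\le 2k+1$ (automatic). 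Combining these, I would pin down $d(x,y)=d(y,z)=d(z,t)=d(t,x)=k+1$ — the upper bounds come from the $C_4$ conditions, and the lower bounds follow because, e.g., if $d(x,y)\le k$ then $d(x,z)\le d(x,y)+d(y,z)\le k+(k+1)=2k+1$, which is fine for the $G^{2k}$-$C_4$ requirement, so I actually need the diamond condition plus the triangle inequality to rule out the side distances being too small: if $d(x,y)+d(y,z)\le 2k+1$ then $x,z$ would be adjacent in $G^{2k+1}$, making both diagonals chords and giving a $K_4$ (or at least more than a diamond) rather than a diamond in $G^{2k+1}$, contradiction; symmetrically for the other corners. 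This yields the four side-distances exactly $k+1$ and the two diagonals $2k+2$ and $2k+1$. Then, after relabeling so that $d(x,z)=2k+2=d(y,t)$ with the diamond chord being the pair whose shortest path passes through a common triangle (giving the odd distance), I would verify these are exactly the hypotheses of the Fig.~\ref{fig:unhellified-frames}(b) $\{a,b,c,d\}$-distance-preserving subgraph with $k=l$, invoke Lemma~\ref{d-frame-hellified}, and conclude $G$ has an isometric $H_2^k$ with $x,y,z,t$ among its vertices.

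The main obstacle I anticipate is the bookkeeping around the diamond in $G^{2k+1}$: unlike the clean ``all-or-nothing'' diagonal behaviour in Lemma~\ref{lem:h1-power}, here the two diagonals of the would-be $C_4$ behave asymmetrically at the threshold $\ell=2k+1$ — one becomes an edge (the chord of the diamond) and the other does not — and I must make sure the definition of ``form a diamond'' is used with the correct orientation (which pair is the chord) and that this orientation is consistent with the triangle-apex structure of $H_2^k$, where one of the Fig.~\ref{fig:unhellified-frames}(b) ``distance sums'' is of the form $p+q+1$ (an odd, triangle-type Gromov product via Lemma~\ref{fnew}) rather than $p+q$. Concretely I would use Lemma~\ref{fnew} applied to an appropriate triple among $x,y,z,t$ to produce the triangle needed as input to Lemma~\ref{d-frame-hellified}. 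A secondary subtlety is confirming that no $C_4$ in $G^{2k}$ condition is actually needed beyond what the endpoints $\ell=k+1$ and $\ell=2k$ already give — i.e.\ that the intermediate values $\ell\in(k+1,2k)$ are automatically $C_4$'s once the two extremes are, which is immediate from monotonicity of graph powers, so the statement's quantifier over the whole range is harmless.
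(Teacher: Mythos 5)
Your proposal is correct and follows essentially the same route as the paper: the forward direction reads off the distances among the four extreme vertices of $H_2^k$ (sides $k+1$, diagonals $2k+1$ and $2k+2$), and the converse pins down exactly those distances from the $C_4$/diamond conditions, produces the two corner triangles via Lemma~\ref{fnew}, and invokes Lemma~\ref{d-frame-hellified}. The only wobbles are your initial misstatement $d_G(a,c)=d_G(b,d)=2k+2$ (Fig.~\ref{fig:unhellified-frames}(b) gives $d_G(b,d)=k+l+1=2k+1$), which you correctly repair before concluding, and the stray ``$d(x,z)=2k+2=d(y,t)$''; note also that the paper additionally disposes of a case $d(y,t)=2k+2$ by finding an isometric $H_1^{k+1}$ (which contains an isometric $H_2^k$), a case your reading of the diamond condition excludes directly.
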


  \begin{proof}
    Suppose $G$ has $H_2^{k}$ as an isometric subgraph. Then, for four extreme vertices $x$, $y$, $z$, $t$ of $H_2^{k}$,
    we have $d(x,y)=d(y,z)=d(z,t)=d(t,x)=k+1$ and $d(x,z)=2k+2$ and $d(y,t)=2k+1$.
    Thus, $x$, $y$, $z$, $t$, form $C_4$ in $G^\ell$ for all $\ell \in [k+1,2k]$ and form a diamond in $G^{2k+1}$.

    Next, let $x$, $y$, $z$, $t$ be four vertices in $G$ that form $C_4$ in $G^\ell$ for all $\ell \in [k+1, 2k]$ and form a diamond in $G^{2k+1}$.
    Then, each of $d(x,y)$, $d(y,z)$, $d(z,t)$, $d(t,x)$ is less than or equal to $k + 1$.
    Without loss of generality, let $yt$ be the chord of a diamond in $G^{2k +1}$ formed by $x$, $y$, $z$, $t$.  Thus, $d(y,t)\geq 2k+1$ and $d(x,z)\geq 2k+2$.
    From these distance requirements, necessarily, $d(x,y)=d(y,z)=d(z,t)=d(t,x)=k+1$, $2k+1\leq d(y,t)\leq 2k+2$ and $d(x,z)= 2k+2$.
    If $d(y,t)= 2k+2$, then by Lemma \ref{s-frame-hellified}, $G$ has an isometric $H_1^{k+1}$, and hence an isometric $H_2^k$ (note that $H_1^{k+1}$ contains an isometric $H_2^k$).
    Let now $d(y,t)= 2k+1$. By Lemma \ref{fnew}, there exist shortest paths $P(x,y)$ and $P(x,t)$ such that the neighbors of $x$ on those paths are adjacent.
    Similarly, there exist shortest paths $P(z,y)$ and $P(z,t)$ such that the neighbors of $z$ on those paths are adjacent. Thus,
    $x$, $y$, $z$, $t$ form $\{x,y,z,t\}$-distance preserving subgraph depicted on Fig. \ref{fig:unhellified-frames}(b).
     By Lemma \ref{d-frame-hellified}, $G$ has an isometric $H_2^k$. \qed
  \end{proof}

The following result generalizes Lemma \ref{c4-g2-then-s4}. 

  \begin{lemma} \label{half-int-graph-power}
    Let $G$ be a Helly graph and $k$ be a non-negative integer. Then $G$ has $H_1^{k+1}$ or $H_3^{k}$ as an isometric subgraph if and only if
    there exist four vertices in $G$ that form $C_4$ in $G^\ell$ for all $\ell \in [k+1, 2k+1]$ or
    there exist four vertices in $G$ that form $C_4$ in $G^\ell$ for all $\ell \in [k+2, 2k+2]$.
  \end{lemma}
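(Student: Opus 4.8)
The plan is to derive this from Theorem~\ref{general-forbidden} together with the elementary fact that four vertices whose two ``diagonals'' are long already witness large hyperbolicity, so that no fresh case analysis is needed.

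For the ``only if'' direction I would simply take the four corner vertices $x,y,z,t$ of the assumed isometric subgraph and read off their mutual $G$-distances. If $G$ contains $H_1^{k+1}=H_1^{k+1,k+1}$ isometrically, then (as already recorded in the proof of Lemma~\ref{lem:h1-power}) $d_G(x,y)=d_G(y,z)=d_G(z,t)=d_G(t,x)=k+1$ and $d_G(x,z)=d_G(y,t)=2k+2$, so $x,y,z,t$ induce a $C_4$ in $G^\ell$ precisely for $k+1\le\ell\le 2k+1$ --- the first alternative. If $G$ contains $H_3^{k}=H_3^{k,k}$ isometrically, then by Lemma~\ref{f-frame-hellified} (with $l=k$) the corners satisfy $d_G(x,y)=d_G(y,z)=d_G(z,t)=d_G(t,x)=k+2$ and $d_G(x,z)=d_G(y,t)=2k+3$, so they induce a $C_4$ in $G^\ell$ precisely for $k+2\le\ell\le 2k+2$ --- the second alternative.

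For the ``if'' direction, suppose four vertices $x,y,z,t$, with non-adjacent pairs $\{x,z\}$ and $\{y,t\}$, induce a $C_4$ in $G^\ell$ for all $\ell$ in one of the two ranges. Evaluating membership at the smallest $\ell$ in the range gives $d_G(x,y),d_G(y,z),d_G(z,t),d_G(t,x)\le k+1$ (resp.\ $\le k+2$), and non-adjacency at the largest $\ell$ gives $d_G(x,z),d_G(y,t)\ge 2k+2$ (resp.\ $\ge 2k+3$). Consequently, among the three sums $d_G(x,z)+d_G(y,t)$, $d_G(x,y)+d_G(z,t)$ and $d_G(x,t)+d_G(y,z)$, the first is strictly the largest and exceeds the second largest by at least $(4k+4)-(2k+2)=2k+2$ (resp.\ $(4k+6)-(2k+4)=2k+2$); hence $hb(x,y,z,t)\ge k+1$ and therefore $hb(G)\ge k+1>k+\tfrac12$. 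Theorem~\ref{general-forbidden} then forces $G$ to contain $H_1^{k+1}$ or $H_3^{k}$ as an isometric subgraph. There is no circularity, since Theorem~\ref{general-forbidden} is proved via Lemma~\ref{lem:new-helly-symmetric-reduction} and does not use the present lemma.

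I expect the only delicate point to be purely bookkeeping: verifying that the corner distances of $H_1^{k+1}$ and $H_3^{k}$ are exactly $(k+1,\,2k+2)$ and $(k+2,\,2k+3)$ --- and hence that the relevant ranges of $\ell$ are $[k+1,2k+1]$ and $[k+2,2k+2]$ and not shifted by one. A self-contained proof of the converse that did not appeal to Theorem~\ref{general-forbidden} would be substantially more work: one would have to determine which of the four side distances equal $k+1$ versus $k+2$ and which diagonals equal $2k+3$ versus $2k+4$, then use Lemma~\ref{fnew} on each of the relevant geodesics to produce either a common midpoint or a midpoint triangle, and finally invoke Lemma~\ref{s-frame-hellified}, Lemma~\ref{d-frame-hellified}, or Lemma~\ref{f-frame-hellified}; that case split is where essentially all of the effort would lie, so the shortcut above is the route I would take.
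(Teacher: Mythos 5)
Your proof is correct, and its forward direction coincides with the paper's: one simply reads off the corner distances of $H_1^{k+1}$ (sides $k+1$, diagonals $2k+2$) and of $H_3^{k}$ (sides $k+2$, diagonals $2k+3$); note only that these numbers come from the definition of $H_3^{k,k}$ (equivalently Fig.~\ref{fig:unhellified-frames}(c) with $k=l$) rather than from Lemma~\ref{f-frame-hellified} as you cite, but the values are right. The converse is where you genuinely diverge. The paper argues constructively: it invokes Lemma~\ref{lem:h1-power} for the first range, and for the second range runs a three-case analysis on whether the diagonals equal $2k+4$ or $2k+3$, repeatedly applying Lemma~\ref{fnew} to produce the needed median vertices or triangles and then Lemma~\ref{s-frame-hellified}, the argument of Lemma~\ref{int-graph-power}, or Lemma~\ref{f-frame-hellified} to exhibit an explicit isometric $H_1^{k+1}$, $H_2^{k+1}$ (hence $H_1^{k+1}$), or $H_3^{k}$. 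You instead note that either power condition forces the diagonal sum to exceed each of the other two sums by at least $2k+2$, so $2\,hb(x,y,z,t)\ge 2k+2$, hence $hb(G)\ge k+1>k+\tfrac12$, and you conclude by the contrapositive of Theorem~\ref{general-forbidden}; this is legitimate and non-circular, since that theorem rests only on Lemma~\ref{lem:new-helly-symmetric-reduction} and the hellification lemmas and nowhere uses the present statement. Your route is considerably shorter; what the paper's longer route buys is the finer constructive information about which obstruction arises from which distance pattern (in particular, that $H_3^{k}$ itself is needed only when both diagonals equal $2k+3$ and all four sides equal $k+2$), which the statement does not require but which your argument does not recover.
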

  \begin{proof}
    By Lemma \ref{lem:h1-power}, $G$ has $H_1^{k+1}$ as an isometric subgraph if and only if there exist four vertices in $G$ that form $C_4$ in $G^\ell$ for all $\ell \in [k+1, 2k+1]$.
    Suppose $G$ has $H_3^{k}$ as an isometric subgraph. Then, for four extreme vertices $x$, $y$, $z$, $t$, we  have
    $d(x,y)=d(y,z)=d(z,t)=d(t,x)=k+2$ and $d(x,z)=d(y,t)=2k+3$.
    Thus, $x$, $y$, $z$, $t$  form $C_4$ in $G^\ell$ for all $\ell \in [k+2, 2k+2]$.

    Next, let $x,y,z,t$ be four vertices in $G$ that form $C_4$ in $G^\ell$ for all $\ell \in [k+2, 2k+2]$.
    Then, each of $d(x,y)$, $d(y,z)$, $d(z,t)$, and $d(t,x)$ is less than or equal to $k+2$, and each of $d(x,z)$ and $d(y,t)$ is
    greater than or equal to $2k+3$. Additionally, $d(x,y)$, $d(y,z)$, $d(z,t)$, and $d(t,x)$ must be greater than or equal to $k+1$,
    since otherwise $d(x,z) < 2k+3$ and $d(y,t) < 2k+3$. Thus, $2k+3\leq d(x,z)\leq 2k+4$ and $2k+3\leq d(y,t)\leq 2k+4$.
    We consider three cases.

      In case 1, let $d(x,z)=d(y,t)=2k+4$. Then, necessarily, $d(x,y)=d(y,z)=d(t,z)=d(t,x)=k+2$. By Lemma \ref{s-frame-hellified},
      $G$ has an isometric $H_1^{k+2}$. Since $H_1^{k+1}$ is an isometric subgraph of $H_1^{k+2}$, $G$ has an isometric $H_1^{k+1}$.

      In case 2, let $d(x,z)=2k+3$ and $d(y,t)=2k+4$. Then $d(x,y)=d(y,z)=d(t,z)=d(t,x)=k+2$ (otherwise, $d(y,t) < 2k + 4$).
      As in the proof of Lemma \ref{int-graph-power}, we conclude that $G$ has an isometric $H_2^{k+1}$. Thus, $G$ has an isometric $H_1^{k+1}$ (recall that $H_2^{k+1}$ contains an isometric $H_1^{k+1}$).

      In case 3, let $d(x,z)=d(y,t)=2k+3$. First assume, without loss of generality, that  $d(x,y)=k+1$. Then, necessarily, $d(x,t)=d(y,z)=k+2$.
      If also $d(t,z)=k+1$ then, by Lemma \ref{s-frame-hellified}, $G$ has an isometric $H_1^{k+1,k+2}$. Since $H_1^{k+1}$ is an isometric subgraph of $H_1^{k+1,k+2}$, $G$ has an isometric $H_1^{k+1}$.
      If now $d(t,z)=k+2$ then, by Lemma \ref{fnew} applied to $y,z,t$, there exist shortest paths $P(z,y)$ and $P(z,t)$ such that the neighbors of $z$ on those paths are adjacent.
      Let $z'$ be the neighbor  of $z$ on $P(z,y)$. We have $d(x,t)=d(t,z')=k+2$, $d(x,y)=d(y,z')=k+1$ and hence $d(x,z')=2k+2$ as $d(x,z)=2k+3$. 
      By Lemma \ref{fnew} applied to $x,z',t$, there exists a vertex $t'$ adjacent to $t$ such that $d(t',x)=k+1$ and $d(t',z')=k+1$,
      \heather{as shown in Fig.~\ref{fig:graphPowers}(a).}
      Since $d(x,y)=d(x,t')=k+1$ and $d(y,t)=2k+3$, necessarily $d(y,t')=2k+2$.
      By Lemma \ref{s-frame-hellified}, $G$ has an isometric $H_1^{k+1}$ with $x,y,z',t'$ as corner points. 

      To finish case 3, it remains to analyze the situation when $d(x,z)=d(y,t)=2k+3$ and $d(x,y)=d(y,z)=d(z,t)=d(t,x)=k+2$.
      By Lemma \ref{fnew} applied to $x,z,t$, there is a triangle $\triangle(t_x,t,t_z)$ such that $t_x$ is the neighbor  of $t$ on a shortest $(x,t)$-path, 
      and $t_z$ is the neighbor  of $t$ on a shortest $(z,t)$-path. 
      Similarly, by Lemma \ref{fnew} applied to $x,z,y$, there is a triangle $\triangle(y_x,y,y_z)$ such that $y_x$ is the neighbor of $y$ on a shortest $(x,y)$-path, 
      and $y_z$ is the neighbor of $y$ on a shortest $(z,y)$ path. 
      From the distance requirements, $2k+1 \leq d(t_x,y_x) \leq 2k+2$ and $2k+1 \leq d(t_z,y_z) \leq 2k+2$ (recall that $d(x,z)=d(y,t)=2k+3$).

      If $d(t_x,y_x)=2k+2$ then, by Lemma \ref{fnew} applied to $y_x,z,t_x$, there exists a vertex $z'$ adjacent to~$z$ such that $d(z',y_x)=d(z',t_x)=k+1$, \heather{as shown in Fig.~\ref{fig:graphPowers}(b).}
      Necessarily, $d(x,z')=2k+2$ as $d(x,z)=2k+3$. Now, $d(x,y_x)=d(x,t_x)=d(z',y_x)=d(z',t_x)=k+1$ and $d(y_x,t_x)=d(x,z')=2k+2$, 
      and we can apply Lemma \ref{s-frame-hellified} and get in $G$ an isometric $H_1^{k+1}$ with $x,y_x,z',t_x$ as corner points. 
      Thus, we may assume that $d(t_x,y_x)=2k+1$. Similarly, we may assume that $d(t_z,y_z)=2k+1$.
    
      By Lemma \ref{fnew} applied to $t_x,x,y_x$, there exist shortest paths $P(y_x,x)$ and $P(t_x,x)$ such that the neighbors of $x$ on those paths are adjacent. By Lemma \ref{fnew} applied to $y_z,z,t_z$, there exist shortest paths $P(y_z,z)$ and $P(t_z,z)$ such that the neighbors of $z$ on those paths are adjacent, \heather{as shown in Fig.~\ref{fig:graphPowers}(c).}
      Thus, we have constructed an $\{x,y,z,t\}$-distance preserving subgraph depicted on Fig. \ref{fig:unhellified-frames}(c). Hence, by Lemma \ref{f-frame-hellified}, $G$ has an isometric subgraph $H_3^k$.   \qed
  \end{proof}

   \begin{center}
   \centering
   \image{}{
     \includegraphics[scale=0.81]{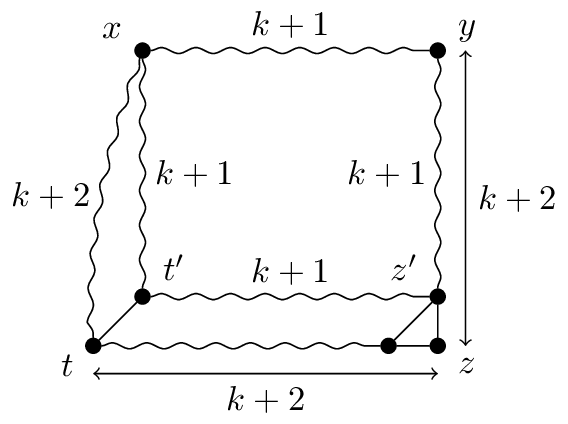}
   }\quad\image{}{
     \includegraphics[scale=0.81]{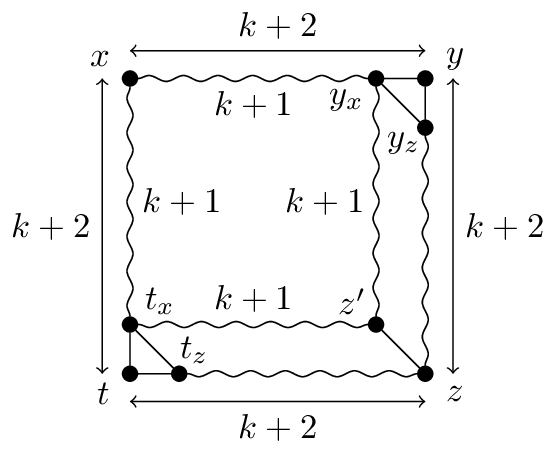}
   }\quad\image{}{
     \includegraphics[scale=0.81]{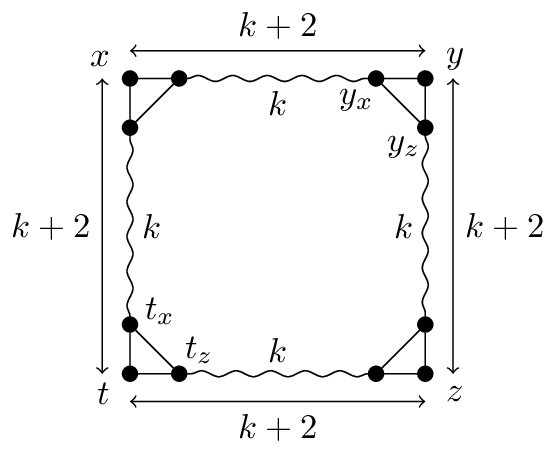}
   }\captionof{figure}{Illustrations for Case 3 for the proof of Lemma~\ref{half-int-graph-power}.}\label{fig:graphPowers}
   \end{center}\setcounter{subfloat}{0}

  The following result reformulates Theorem \ref{general-forbidden} in terms of graph powers. It follows directly from Theorem \ref{general-forbidden},
  Lemma \ref{int-graph-power}, and Lemma \ref{half-int-graph-power}.
  \heather{It relates to a result of Coudert and Ducoffe \cite{Coudert14}, which characterizes any $\frac{1}{2}$-hyperbolic graph by forbidding $C_4$ in certain graph powers.
  Here, we give a characterization for any $\delta$-hyperbolic Helly graph, for all values of $\delta$,
  by forbidding $C_4$ and the diamond graph in certain graph powers.}

  \begin{theorem} \label{thm:powers}
    Let $G$ be a Helly graph and $k$ be a non-negative integer. \\
    - $hb(G) \leq k$ if and only if
    there are no four vertices that form $C_4$ in $G^\ell$ for all $\ell \in [k+1, 2k]$ and form a diamond in $G^{2k+1}$. \\
    - $hb(G) \leq k+\frac{1}{2}$ if and only if
    there are no four vertices that form $C_4$ in $G^\ell$ for all $\ell \in [k+1, 2k+1]$, and
    there are no four vertices that form $C_4$ in $G^\ell$ for all $\ell \in [k+2, 2k+2]$.
  \end{theorem}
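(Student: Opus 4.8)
The plan is to obtain Theorem~\ref{thm:powers} as a purely logical consequence of Theorem~\ref{general-forbidden} together with the two graph-power descriptions of the forbidden subgraphs, namely Lemma~\ref{int-graph-power} for $H_2^{k}$ and Lemma~\ref{half-int-graph-power} for the pair $H_1^{k+1}$, $H_3^{k}$. No new geometric construction is needed: all of the work with the Helly property and Lemma~\ref{fnew} has already been carried out in the proofs of those two lemmas, so this last step is just a substitution.

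For the integer bound, Theorem~\ref{general-forbidden} gives that $hb(G)\le k$ if and only if $G$ has no isometric $H_2^{k}$. By Lemma~\ref{int-graph-power}, $G$ has an isometric $H_2^{k}$ exactly when some four vertices of $G$ form a $C_4$ in $G^{\ell}$ for every $\ell\in[k+1,2k]$ and a diamond in $G^{2k+1}$. Negating both sides of the equivalence yields the first displayed statement.

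For the half-integer bound, Theorem~\ref{general-forbidden} gives that $hb(G)\le k+\frac{1}{2}$ if and only if $G$ contains neither $H_1^{k+1}$ nor $H_3^{k}$ as an isometric subgraph. By Lemma~\ref{half-int-graph-power}, $G$ contains at least one of $H_1^{k+1}$, $H_3^{k}$ as an isometric subgraph exactly when some four vertices form a $C_4$ in $G^{\ell}$ for all $\ell\in[k+1,2k+1]$, or some four vertices form a $C_4$ in $G^{\ell}$ for all $\ell\in[k+2,2k+2]$. Applying De~Morgan --- the negation of ``$A$ or $B$'' is ``not $A$ and not $B$'' --- turns ``$G$ contains neither forbidden subgraph'' into ``neither of the two $C_4$-configurations exists'', which is precisely the second displayed statement.

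I do not expect any real obstacle here: the argument is bookkeeping over a Boolean combination together with the closed index ranges $[k+1,2k]$, $\{2k+1\}$, $[k+1,2k+1]$, $[k+2,2k+2]$. The only point deserving a moment's care is the degenerate case $k=0$, where $[k+1,2k]=[1,0]$ is empty and the first condition reduces to the absence of a diamond in $G=G^{1}$; one should check this is consistent with $hb(G)\le 0$ meaning $G$ is a block graph. The genuinely hard content --- recovering isometric copies of $H_1$, $H_2$, $H_3$ from the $C_4$/diamond patterns, in particular the three-case analysis inside the proof of Lemma~\ref{half-int-graph-power} --- is already established and is simply invoked.
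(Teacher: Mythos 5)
Your proposal is correct and matches the paper exactly: the paper likewise derives Theorem~\ref{thm:powers} as a direct consequence of Theorem~\ref{general-forbidden}, Lemma~\ref{int-graph-power}, and Lemma~\ref{half-int-graph-power}, with the only content being the negation/De Morgan bookkeeping you describe. Your extra remark about the degenerate case $k=0$ is a reasonable sanity check but not something the paper addresses separately.
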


\medskip
\noindent{\bf Acknowledgment:} We would like to
thank anonymous reviewers for many useful suggestions and comments. 

 %
 %
 \bibliographystyle{plain}

\end{document}